\def\BibTeX{{\rm B\kern-.05em{\sc i\kern-.025em b}\kern-.08em
    T\kern-.1667em\lower.7ex\hbox{E}\kern-.125emX}}
\newtheorem*{theorem*}{Theorem}
\newtheorem*{corollary*}{Corollary}
\theoremstyle{remark}
\newtheorem*{claim*}{Claim}
\theoremstyle{plain}
\newtheorem{constraint}{Prior Knowledge}
\newcommand*\diff{\mathop{}\!\mathrm{d}}
\newcommand*\MyExp{{\mathbb E}}
\DeclareMathOperator*{\argmin}{arg\,min}
\DeclareMathOperator*{\argmax}{arg\,max}
\definecolor{darkgreen}{rgb}{0,0.5,0}
\definecolor{purple}{rgb}{1,0,1}
\newcommand{\kibitz}[2]{\ifnum\Comments=1\textcolor{#1}{#2}\fi}
\newcommand{\kizito}[1]{\kibitz{red}      {[Kizito: #1]}}
\newcommand{\xingyu}[1]  {\kibitz{darkgreen}   {[Xingyu: #1]}}
\begin{document}
\begin{textblock*}{20cm}(2cm,1cm)
\textcolor{red}{\textbf{Accepted by IEEE Transactions on Software Engineering}}
\end{textblock*}


\title{The Unnecessity of Assuming Statistically Independent Tests in Bayesian Software Reliability Assessments}

\author{Kizito~Salako,
        Xingyu~Zhao
\IEEEcompsocitemizethanks{\IEEEcompsocthanksitem K.~Salako is with the Centre for Software Reliability, City, University of London, Northampton Square EC1V 0HB, U.K.
(email: k.o.salako@city.ac.uk)
\IEEEcompsocthanksitem X.~Zhao is with the Department of Computer Science, University of Liverpool, Ashton Street L69 3BX, U.K. (email: xingyu.zhao@liverpool.ac.uk)}
\thanks{Manuscript received August 15, 2022; revised December 21, 2022.}}



\IEEEtitleabstractindextext{%
\begin{abstract}
When assessing a software-based system, the results of Bayesian statistical inference on operational testing data can provide strong support for software reliability claims. For inference, this data (i.e. software successes and failures) is often assumed to arise in an independent, identically distributed (i.i.d.) manner. In this paper we show how conservative Bayesian approaches make this assumption unnecessary, by incorporating one's doubts about the assumption into the assessment. We derive conservative confidence bounds on a system's probability of failure on demand (\emph{pfd}), when operational testing reveals no failures. The generality and utility of the confidence bounds are illustrated in the assessment of a nuclear power-plant safety-protection system, under varying levels of skepticism about the i.i.d. assumption. The analysis suggests that the i.i.d. assumption can make Bayesian reliability assessments extremely optimistic -- such assessments do not explicitly account for how software can be very likely to exhibit no failures during extensive operational testing despite the software's \emph{pfd} being undesirably large. 
\xingyu{Also to highlight some practical insights, maybe based on what KS writes later in the papers, but in a shorter form of ``caution is warranted'': confidence based on i.i.d. assumptions can be very optimistic as failure-free tests accumulate; ii) some forms of doubt about the i.i.d. assumption significantly impact confidence, while other forms do not; iii) surprisingly, failure-free testing can eventually undermine confidence in the system satisfying the bound. }
\xingyu{In the NWTES paper, we have something like ``For instance, even if one is certain that some new software must be more reliable than an old product, using the reliability distribution for the old software as a prior distribution when assessing the new system gives optimistic, not conservative, predictions for the posterior reliability of the new system after seeing operational testing evidence.''}
\end{abstract}

\begin{IEEEkeywords}
conservative Bayesian inference, CBI,
dependability claims,
independent software failures,
operational testing, software reliability assessment,
statistical testing
\end{IEEEkeywords}
}
\maketitle


\IEEEpeerreviewmaketitle

\IEEEraisesectionheading{\section{Introduction}\label{sec_intro}}

\IEEEPARstart{C}{onsider} a software-based on-demand system subjected to black-box operational testing. During testing, an assessor observes the system -- in particular, the software --  as it responds to each demand in a random sequence of demands. By noting those demands the software correctly responds to, and those it fails on, the assessor intends to gain enough confidence to support claims of the system being sufficiently reliable\footnote{This work focuses on software reliability; i.e. only software failures are considered to cause system failure.}. For example, confidence in the system's unknown \emph{probability of failure on demand} (\emph{pfd}) -- $X$ say -- being sufficiently small. A Bayesian approach to gaining such confidence typically requires 2 things of our assessor: {\bf i)} \emph{prior} to operational testing, the assessor must scrutinise all evidence related to the system's operational readiness. Via such probing and the assessor's domain expertise, the assessor forms beliefs about which ranges of \emph{pfd} values are most likely, and which ranges are less so. Examples of evidence include formal analyses of a codebase, the performance of a system during operation \cite{thomas_e_wierman_reliability_2001}, the historical performance of similar systems \cite{bunea_two-stage_2005,porn_two_stage_1996}, and improvements in software development approaches \cite{fenton_software_1999}; {\bf ii)} the assessor must postulate a \emph{statistical model} -- in essence, a family of stochastic processes, any of which could characterise the occurrence of the system's successes and failures during operation. These processes should exhibit statistical properties consistent with the assessor's beliefs.

For the statistical model it is often assumed that software failures and successes are the outcomes of \emph{independent, identically distributed} (i.i.d.) Bernoulli trials. This is mathematically convenient and guarantees the \emph{strong law of large numbers} -- i.e., operational testing statistics converge \emph{almost surely} to dependability measures of interest (e.g. \emph{pfd}). The i.i.d. assumption can also be reasonable on practical grounds: a typical justification is when demands are rare, and the states/properties of the software and its operational environment are effectively ``reset'' inbetween these rare demand occurrences. Nevertheless, we must point out that \emph{any} statistical model used in reliability assessment -- including one reliant on the i.i.d. assumption -- is necessarily a postulate. One does not (cannot?) know with complete certainty that such model assumptions are valid in practice. The skeptical assessor allows for the possibility that the i.i.d. assumption \emph{does not} hold, even if this is unlikely. By incorporating their skepticism into the assessment, our assessor can investigate whether their doubts have a significant impact on their confidence in the system. In the best case, their confidence is insensitive to significant departures from the i.i.d. assumption. But in the worst case, ignoring the slightest failure correlations could lead to seriously misplaced confidence and dangerously optimistic reliability claims.

To aid the skeptical assessor, this work presents a new application of \emph{conservative Bayesian inference} (CBI) techniques for reliability assessments. The paper's research contributions are:
\begin{enumerate}[label={\arabic*)}]
\item incorporating a formal notion of ``\emph{doubting}'' the i.i.d. assumption into reliability assessments (Section~\ref{sec_consconfbounds});

   
\item a novel CBI technique that accounts for correlated operational testing outcomes (Sections~\ref{sec_KlotzModel}, \ref{sec_consconfbounds} and Appendices~\ref{sec_app_C}, \ref{sec_app_E})\xingyu{KS polish later}; 
    
\item  a theorem that gives conservative posterior confidence bounds on a system's \emph{pfd} (Section~\ref{sec_consconfbounds} and Appendices~\ref{sec_app_C}, \ref{app_D}).
    

\end{enumerate}


The paper's outline: Section~\ref{sec_relwork} gives critical context, while Section~\ref{sec_review_CBI} reviews CBI. Section~\ref{sec_KlotzModel} introduces a statistical model for (possibly) dependent system failures/successes. The conservative confidence bounds on \emph{pfd} in Section~\ref{sec_consconfbounds} are applied in Section~\ref{sec_results}, and discussed in Section~\ref{sec_discussion}. Section~\ref{sec_conc} concludes the paper.

\section{Related Work}
\label{sec_relwork}
\subsection{The i.i.d. Assumption in Reliability Assessments}
Statistical models with the i.i.d. assumption have a long history of being used in software reliability assessment. Thayer \emph{et al.}'s~\cite{ThayerEtAl_1975} model was one of the earliest, used in early works on random testing \cite{DuranEtal_1984_EvaluationofRandomTesting}. Regulatory bodies have recommended using the i.i.d. assumption in reliability assessments when appropriate~\cite{atwood2003handbook}. 

But there are reasons to doubt the assumption. For instance, the possibility of ``failure clustering''; where a system receives sequences of inputs from its operational environment that cause the system to fail. These inputs form trajectories through a system's \emph{failure region} -- the subset of inputs that cause system failure. Failure regions can have topologically interesting properties that allow for failure clustering \cite{AmmanKnight_FailureClustering_1988,Bishop_failureclustering_1993}. These observations informed random testing approaches~\cite{Huang_Sun_2021_AdaptiveRandomTestingSurvey} and approaches for assessing systems with recovery block fault-tolerance~\cite{Csenki_RecoveryBlocks_1993,TomekTrivedi_recoveryblocks_1993}. 

\subsection{Weakening the i.i.d. Assumption: Statistical Models}
Various statistical models that weaken the i.i.d. assumption include: Chen and Mill's binary Markov chain~\cite{chen_binary_1996}, Goseva-Popstojanova and Trivedi's Markov renewal process~\cite{goseva_popstojanova_failure_2000} (extended in \cite{Xie_2005_ModellingCorrelatedFailures} by Xie \emph{et al}.), and Bondavalli \emph{et al}.'s Markov model~\cite{bondavalli_dependability_1995} with benign-failure states. Classical statistical inference produces ``\emph{point estimates}'' for these models' parameters, using only data from operational testing. Such estimates do not reflect an assessor's uncertainty about whether the i.i.d. assumption holds or not. Indeed, fitted parameter values imply that the models either exhibit dependence, or they do not -- there is no room for uncertainty here. In contrast, Bayesian inference -- our preferred approach -- allows for such uncertainty. In this paper, within a Bayesian framework, we model the system's failure process as a Markov chain introduced by Klotz~\cite{klotz_statistical_1973}. The Klotz model predates, is consistent with, and has (at most) as many states/parameters as, the models in~\cite{chen_binary_1996,goseva_popstojanova_failure_2000}.

Another advantage of Bayesian approaches is the assessor's beliefs (about the unknown \emph{pfd}) are explicitly accounted for in the assessment; beliefs that are justified by various forms of reliability evidence. And, although models of dependent system failures/successes have been developed, none of the assessment approaches using these models provide demonstrably \emph{conservative} statistical support for the skeptical assessor. Particularly when such support is justified by operational testing and other forms of reliability evidence. To the best of our knowledge, ours is the first approach to guarantee conservatism in the face of i.i.d. uncertainty.

\subsection{(Conservative) Bayesian Methods for Assessments}
Bayesian methods have been applied in various assessment scenarios, e.g. \cite{porn_two-stage_1996, bunea_two-stage_2005} involve hierarchical models, while \cite{WMiller_1992_BayesianReliabilityAssesssment,singh_2001_BayesianAssessment,LittlewoodPopov_2002_AssessingTheReliabilityDiversesoftware,Popov_2013_BayesianReliabilityAssessement} all use families of Beta prior distributions. More recently, \emph{conservative Bayesian inference} (CBI) methods have been developed to: {\bf i)} address the usual challenge of eliciting a suitable prior distribution from an assessor -- a prior that captures all, and only all, of an assessor's beliefs/views about the \emph{pfd}; {\bf ii)} give support for the most pessimistic reliability claims allowed by the reliability evidence. Thereby, CBI prevents dangerously optimistic claims.

Bishop \emph{et al.}~\cite{bishop_toward_2011} introduced CBI, illustrating its use in assessing safety-critical systems. Povyakalo \emph{et al.}~\cite{strigini_software_2013} use CBI to obtain the smallest probability of the system surviving future demands, and Salako~\cite{Salako_QEST_2020} applies CBI to assessing binary classifiers. While Flynn \emph{et al.}~\cite{zhao_assessing_2019,zhao_assessing_2020} apply CBI in the assessment of autonomous vehicle safety. Littlewood \emph{et al.}~\cite{littlewood_reliability_2020} show how CBI supports dependability claims when evidence suggests a new system is an ``improvement'' over an older system it replaces. Salako \emph{et al}.~\cite{salako_conservative_2021} extend this work, by considering more general ``improvement arguments'' for a wider range of assessment scenarios. Our application of CBI differs from these in 2 important ways: it allows for dependent testing outcomes during operational testing, and it incorporates skepticism of the i.i.d. assumption into assessments.
\kizito{TODO: It is obvious we are the authors of this paper! :-) Must add more references for the camera ready version}

CBI methods are closely related to \emph{robust Bayesian analysis}~\cite{berger1994_robustBayesianOverview,berger1987_SensitivityToPrior,Lavine_1991,Berger_1994_RobustnessInBidinesionalModels}, which studies how the results of Bayesian inference are impacted by uncertainty about the inputs to inference -- inputs such as the prior distribution and the statistical model. In particular, Lavine~\cite{Lavine_1991} outlines methods that reveal how uncertainty about the statistical model (specifically, about the so-called sampling distribution) impacts inference. This uncertainty is represented by a suitably general joint prior distribution over the family of stochastic processes defined by the statistical model. Subject to constraints on this prior, algorithms give the largest and smallest values for posterior measures of interest -- e.g. posterior expectations. Our use of CBI parallels the statistical techniques of Lavine, but applied to reliability assessments. Also, Lavine considers likelihoods consisting of products of the same functional form, while we do not (in order to weaken the i.i.d. assumption).  

Draper~\cite{Draper_1995} also tackles the problem of statistical model uncertainty in Bayesian inference, but offers an alternative solution. If one is uncertain about a model's assumptions -- specifically, assumptions that constrain the structural/functional forms of the related family of stochastic processes -- one could replace the model with an expanded model. This expanded model encompasses all of the stochastic processes defined under the original model, as well as other stochastic processes that violate the assumptions in question. A suitable prior distribution over this expanded model has to be defined. Draper argues for this Bayesian hierarchical model as a way of addressing model uncertainty. We concur, and further argue for the inference to be conservative; our results are guaranteed to be conservative, while Draper's are not.

For uncertainty about finitely many alternative models, Pericchi \emph{et al.}~\cite{Pericchi_1994} use discrete prior distributions over these alternatives. In principle, this is a hierarchical model akin to Draper's approach, but in more abstract terms within the robust Bayes framework. Our results lie at the intersection of Pericchi \emph{et al.} and Draper's ideas, within a reliability assessment context.

\subsection{On-demand vs Continuously Operating Software}
This work focuses on assessing on-demand systems: i.e. systems that do not operate continuously, taking action only when certain operating conditions (i.e. demands) arise~\cite{iectr63161_2022,2014_Rausand_relOfSafetyCriticalSys}. Reliability assessments for such systems can use ``discrete-time'' statistical models (e.g. \emph{Bernoulli processes}) with appropriate reliability measures (e.g. \emph{pfd}). Contrast this with continuously operating software\footnote{Such software \emph{can} have downtime due to, say, maintenance or upgrades.}~\cite{MichaelLyu_1996}; for assessing \emph{these} systems, it is more appropriate to employ ``continuous-time'' statistical models (e.g. \emph{non-homogeneous Poisson processes}) and consider reliability measures like failure-rates. \emph{Software reliability growth models} (SRGMs) are an extensive family of stochastic processes used in predicting the future reliability of continuously operating, evolving software (e.g. software with bugs that are discovered and fixed overtime). If bug fixing is successful \emph{without} introducing new software bugs then, \emph{ceteris paribus}, the software becomes more reliable with each fix; i.e. reliability ``grows''. Singpurwalla and Wilson~\cite{1994_Sinpurwalla_RelGrowthModels} give an overview of early SRGMs, while Miller~\cite{1986_Miller_ExpOrderStatistics} details a unified mathematical characterisation of large SRGM subfamilies. Also, see Bergman and Xie's review of early Bayesian SRGMs~\cite{1991_BergmanXie_BayesianSRGMs}.

\section{Review: Bayesian Reliability Assessment}
\label{sec_review_CBI}
Statistical inference for reliability claims comes in different flavours. The  classical ``frequentist'' confidence statement, e.g. 95\% confidence in a \textit{pfd} bound $b$, typically means that with a sufficiently large number of i.i.d. tests, there is no more than a 5\%  chance that the software succeeds on all the tests despite having a \emph{pfd} worse than $b$. While the Bayesian approach, instead, treats \textit{pfd} as a random variable, with a ``prior'' probability distribution representing an assessor's evidence-based beliefs about the \emph{pfd} before operational testing. The assessor updates their beliefs (via Bayes Theorem) upon seeing testing evidence. This yields a ``posterior'' distribution. Reliability claims can be made using this posterior; claims that reflect the assessor's updated judgements. For example, after seeing a sufficiently large number of successful tests, the assessor's Bayesian ``confidence'' in a \emph{pfd} bound $b$ -- i.e. their conditional probability of the \emph{pfd} being less than $b$ -- is 95\%.

Specifically, we recall the standard Bayesian approach to assessing an on-demand system~\cite{atwood2003handbook,littlewood_validation_1993,littlewood_conservative_1997}, in which the i.i.d. assumption is adopted. An i.i.d. Bernoulli process represents the stochastic failure behaviour of the system's software. We denote by $X$ the system's unknown \emph{pfd} due to software failures. According to the operational profile~\cite{musa_operational_1993}, $n$ demands are randomly submitted to the software and no failures are observed (this is the usual requirement when assessing a safety-critical system using operational acceptance testing). 
Let $b$ be the required upper bound on \emph{pfd}. 
Bayesian inference then gives an assessor's posterior confidence in $b$ after observing $n$ tests without failure\footnote{... as well as $P(\mbox{\emph{ failure-free operation}}\mid\,n\mbox{ \it demands without failure})$.}:
\begin{align}
\label{eq_post_cf_bound_with_complete_prior}
P (X &\leqslant b \mid \,n\mbox{ \it demands without failure}) \nonumber \\
&=\frac{P(X\leqslant b,\,n\mbox{ \it demands without failure})}{P(n\mbox{ \it demands without failure})}
\end{align}

In practice, Bayesian reliability assessments require that one specifies a prior distribution representing one's beliefs about the possible values of \textit{pfd}. CBI relaxes this 
 by requiring only a \textit{partial specification} of the prior distribution, when such specifications can be justified by evidence obtained prior to operational testing. Such partial specifications -- so-called  \emph{prior knowledge} (PK) -- take various forms. Most notably, the form of confidence bounds; e.g. being 90\% confident that the \emph{pfd} is no greater than $10^{-3}$, partly because IEC 61508 Safety Integrity Level 3~\cite{iec61508:2010} was a strict requirement in the system's development. When an assessor articulates their beliefs as PKs, there is an \textit{infinitely} large set $\mathcal D$ of all prior distributions that satisfy these PKs. CBI then determines the worst support the priors in $\mathcal D$ can give for a reliability claim; e.g. the smallest posterior confidence \eqref{eq_post_cf_bound_with_complete_prior} an assessor can have:
\begin{equation}
\label{eqn_GenCBIProblem}
\inf_{\mathcal D}P (X \leqslant b \mid \,n\mbox{ \it demands without failure})
\end{equation}
The solution of \eqref{eqn_GenCBIProblem} identifies a prior with posterior confidence that is the infimum value; \emph{no other prior that satisfies the PKs can give a smaller value for posterior confidence} \eqref{eq_post_cf_bound_with_complete_prior}. In this sense, CBI results are conservative. This ``worst case'' prior encodes within it the most conservative assessor beliefs consistent with the PKs. 




\section{A Statistical Model of Testing Outcomes}
\label{sec_KlotzModel}
The Klotz model \cite{klotz_statistical_1973} is a stationary random process consisting of possibly dependent Bernoulli trials. As a model of a system's failure process it generalises the i.i.d. Bernoulli process used in reliability assessments. During operational testing, a random sequence of demands is submitted to the system. On each demand, the system either successfully handles the demand or fails. So, we have a sequence of random variables $T_1, \ldots, T_n$, each taking the values 0 or 1, corresponding to success or failure, respectively. In this paper we follow the usual convention of upper-case letters for random variables and lower-case for their realisations. 

The Klotz model is characterised by a ``frequency'' parameter $x$ and a ``dependence'' parameter $\lambda$. Here, $x$ is the system's \emph{pfd} while $\lambda$ is the probability that a failure is followed by another failure. So, $P(T_1=1)=P(T_i=1)=x$ and $P(T_i=1\mid T_{i-1}=1)=\lambda$ for $i=2,\dots,n$. By requiring the process be \emph{1st-order stationary}, we have $P(T_i=1\mid T_{i-1}=0)=\frac{(1-\lambda)x}{1-x}$ for $i=2,\dots,n$ (see appendix \ref{app_KlotzModel}),
which yields Figure~\ref{fig_klotz}.
\begin{figure}[!h]
  \centering
  \includegraphics[width=0.3\textwidth]{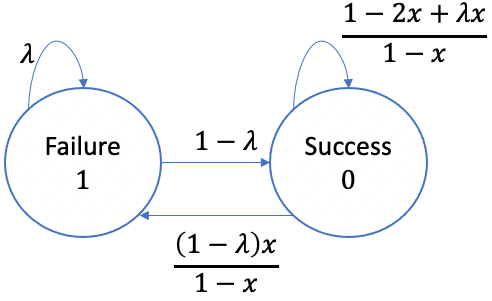}
  \caption{The Klotz model with dependent Bernoulli trials~\cite{klotz_statistical_1973}.}
  \label{fig_klotz}
\end{figure}
The transitions in Figure~\ref{fig_klotz} need to lie between zero and one, so
\begin{align}
\label{eq_ranges_x_lambda}
0\leqslant x <1, \quad \max \left\{0,(2x-1)/x\right\} \leqslant\lambda\leqslant1
\end{align}
Inequalities \eqref{eq_ranges_x_lambda} define a subset $\mathcal R$ of the unit square (Figure~\ref{fig_RregionAndPhicons}). 

\begin{figure*}[htbp]
\captionsetup[figure]{format=hang}	
    \centering
    \begin{subfigure}[h!]{0.33\linewidth}
	\centering	\includegraphics[width=0.75\linewidth]{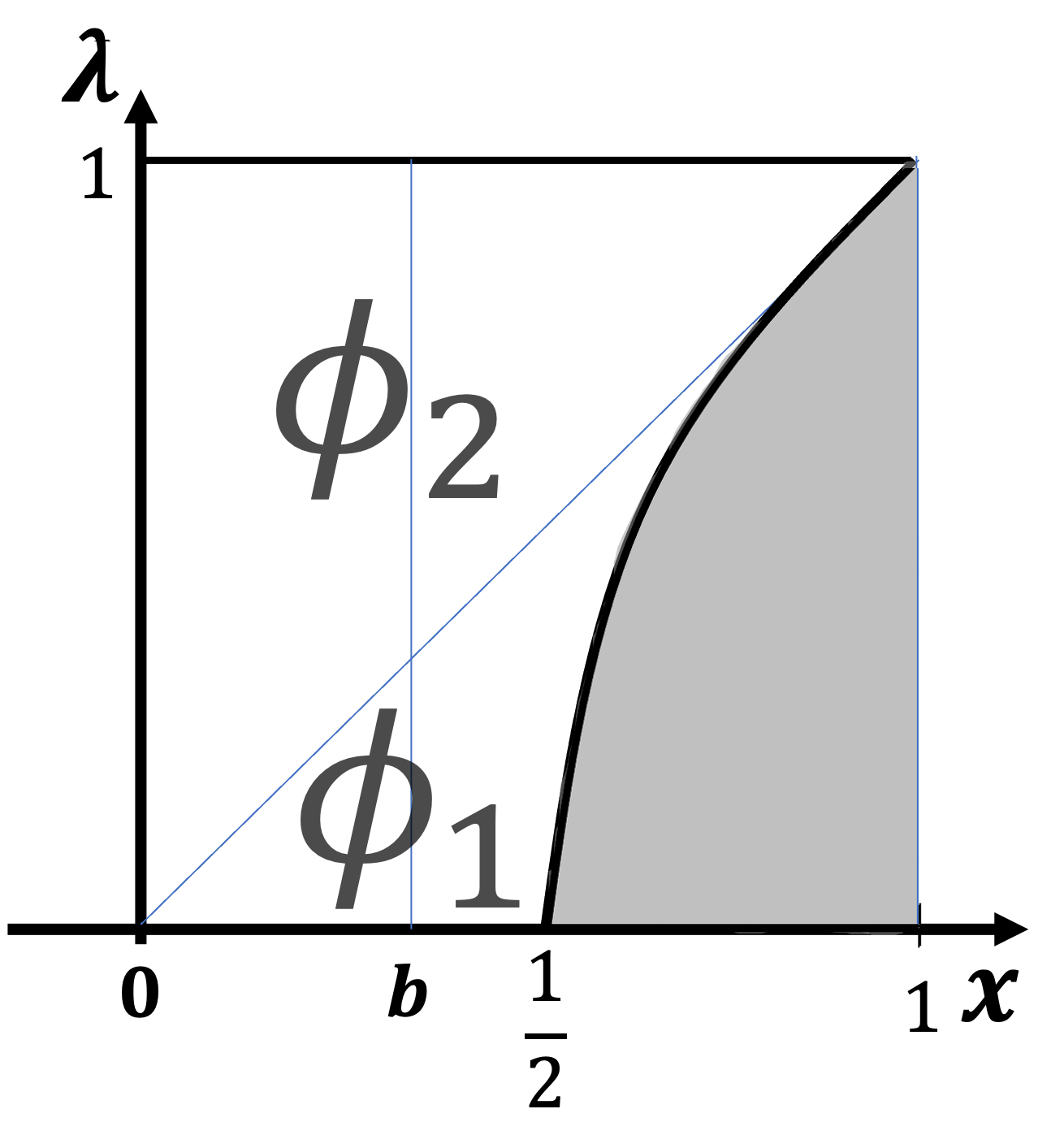}
	\caption{{\footnotesize  \normalsize}}	
	\label{fig_RregionAndPhicons}
	\end{subfigure}
	\begin{subfigure}[]{0.33\linewidth}
	\centering	\includegraphics[width=0.75\linewidth]{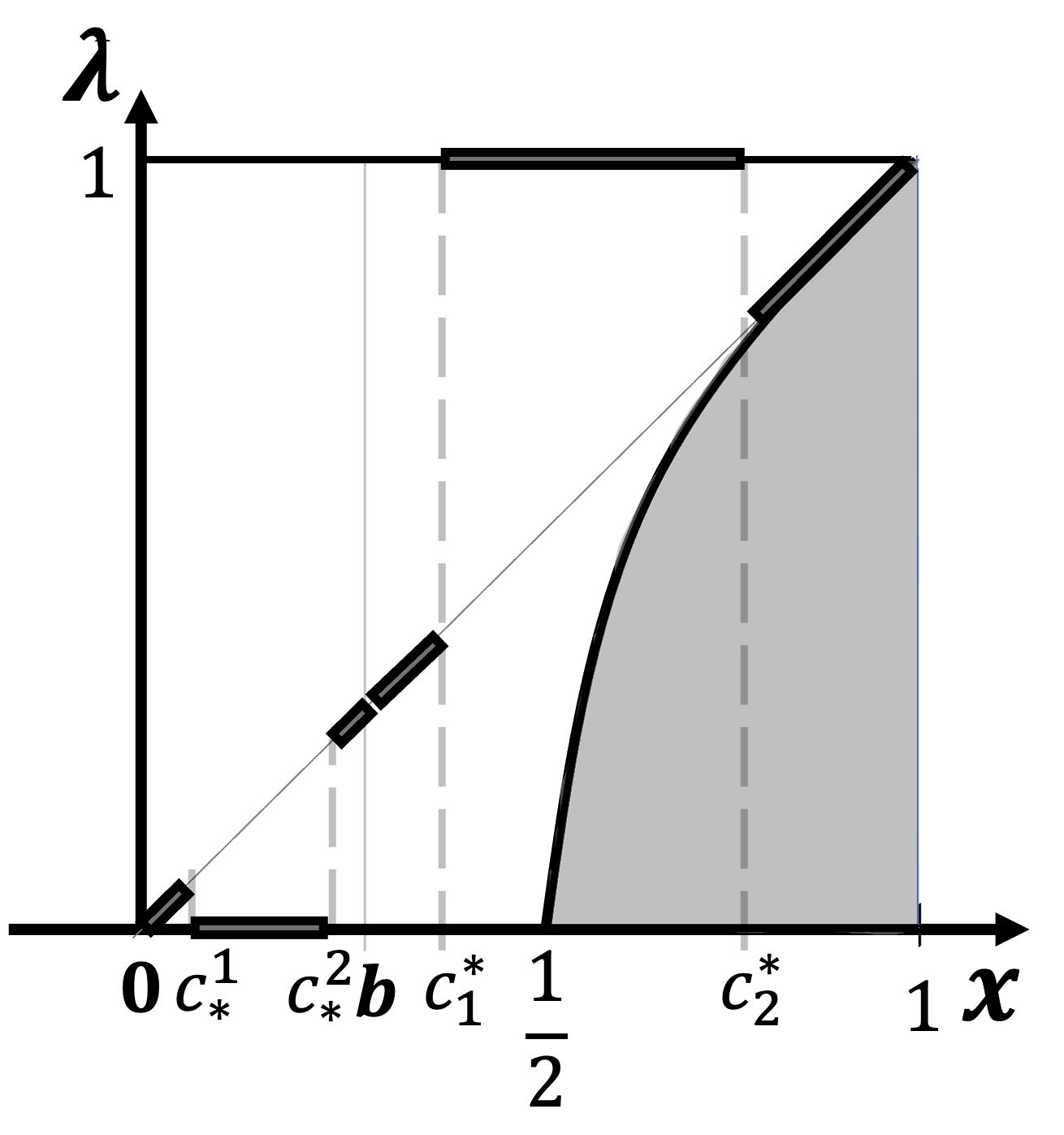}
	\caption{{\footnotesize  \normalsize}}
	\label{fig_continousmarginal}
	\end{subfigure}
 \begin{subfigure}[]{0.33\linewidth}
	\centering	\includegraphics[width=0.75\linewidth]{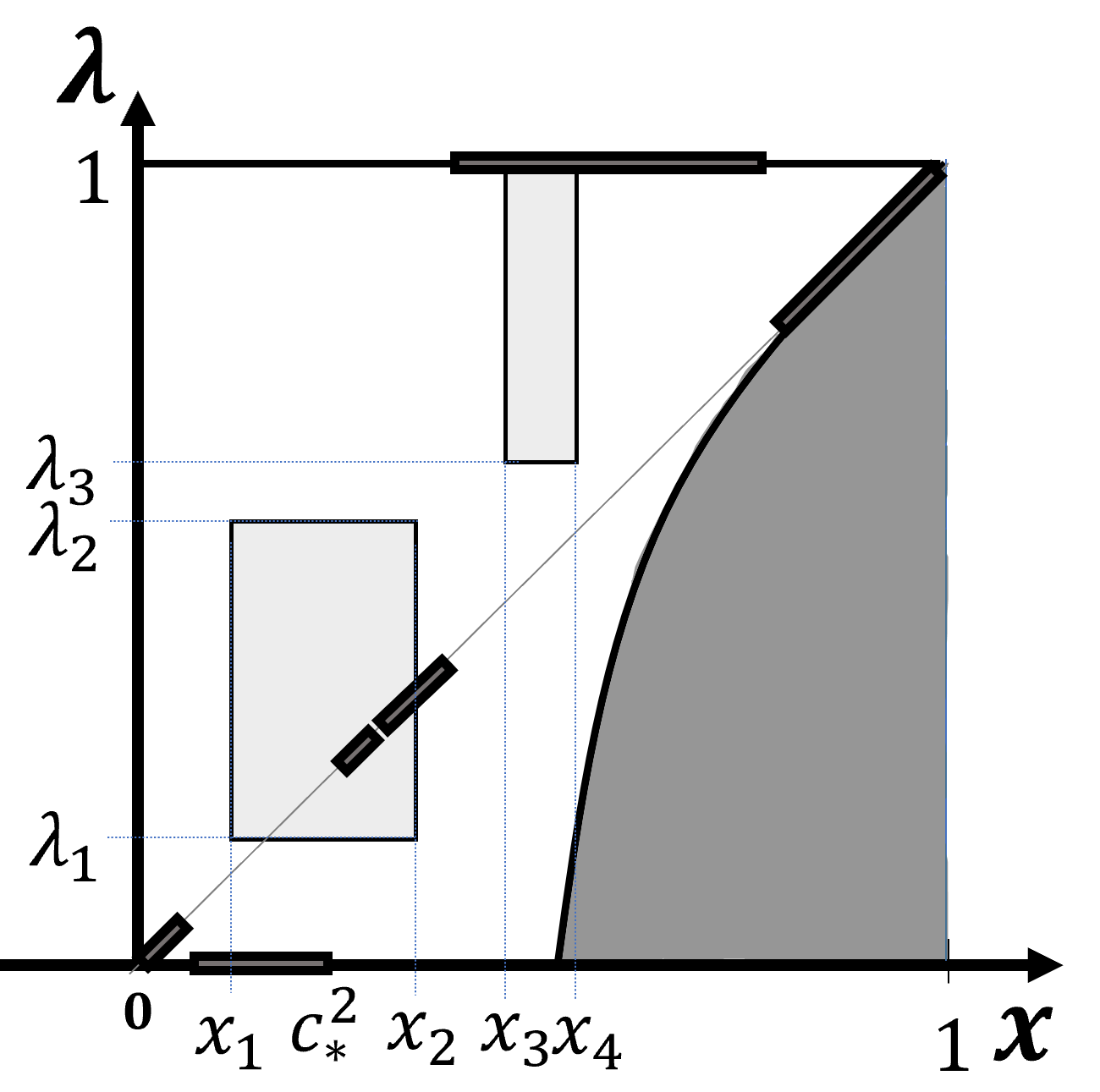}\vspace{0.25cm}
	\caption{{\footnotesize  \normalsize}}
 \label{fig_continousmarginal_example}
	\end{subfigure}
\caption[partition of {\mathcal R}]{{\footnotesize {\bf a)} A depiction of the region $\mathcal R$ defined by inequalities \eqref{eq_ranges_x_lambda}, and the subsets of $\mathcal R$ defined by PKs \ref{cons_negative_dependence} and \ref{cons_positive_dependence}. All prior distributions of $(X,\Lambda)$ have domain $\mathcal R$; {\bf b)} A prior distribution that gives the theorem's infimum -- depicted ``from above'', looking down on its domain $\mathcal R$. The \emph{pfd}s $c_{\ast}^1, \ c_{\ast}^2, \ c^{\ast}_1$, $c^{\ast}_2$ satisfy the theorem's constraints; {\bf c)} Probabilities of the shaded rectangular regions are given by integrals of $f(x)$ from PK\ref{cons_fully_specified_marginal}
.  \normalsize}}  
\label{fig_combined_continousmarginal_RregionAndPhicons}
\end{figure*}

Suppose the system succeeds on all $n$ demands during operational testing. For $(x,\lambda)\in{\mathcal R}$, the Klotz likelihood function gives the probability of observing this sequence of successes\footnote{We define $L(1,1;n):=\lim_{(x,\lambda)\rightarrow (1,1)}L(x,\lambda;n)=0$, where the limiting process involves only $(x,\lambda)$ values in $\mathcal R$.}: 
\begin{align}
L(x,\lambda ; n) {}:={} (1-x)\left(1-\frac{(1-\lambda)x}{1-x}\right)^{n-1}
\label{eqn_xKlotzlklhdFn_maintxt}
\end{align}
Different values of $(x,\lambda)$ can alter the dependence structure and functional form of the Klotz likelihood: {\bf i)} $x=\lambda$ is the special case of independent testing outcomes, with $L(x,x;n) {}={} (1-x)^n$; {\bf ii)} when $\lambda>x$, successes (and failures) tend to cluster during testing -- i.e. positive dependence. In the extreme, $\lambda=1$ and we have $L(x,1;n) {}={} (1-x)$; {\bf iii)} lastly, when $\lambda<x$, successes and failures tend to alternate more often -- i.e.  negative dependence. In the extreme, $x=\frac{1}{2-\lambda}$ and we have $L(\frac{1}{2-\lambda},\lambda;n)=0$.

The assessor's uncertainty about $x$ and $\lambda$ is captured by a joint prior distribution of $(X,\Lambda)$ over ${\mathcal R}$. The assessor's posterior confidence \eqref{eq_post_cf_bound_with_complete_prior} is:
\begin{align}
P(X\leqslant b\,\mid\,n\mbox{\it\ demands without failure}) 
=\frac{\MyExp[L(X,\Lambda;n){\bf 1}_{X\in[0,b]}]}{\MyExp[L(X,\Lambda;n)]}
\label{eqn_Klotzmodelpostconf}    
\end{align}
where the indicator function ${\bf 1}_{x\in\tt S}$ equals 1 if $x\in {\tt S}$, and is 0 otherwise. 

In Section \ref{sec_consconfbounds}, via CBI (i.e. constrained nonlinear optimisation of \eqref{eqn_Klotzmodelpostconf} in the vein of \eqref{eqn_GenCBIProblem}), we derive conservative posterior confidence bounds on \emph{pfd}. 

\section{Conservative Confidence (pfd) Bounds}
\label{sec_consconfbounds}
 We now present conservative posterior confidence bounds on \emph{pfd} for the skeptical assessor. We begin by formalising the assessor's beliefs as a collection of 4 constraints called ``prior knowledge''. These PKs only weakly specify the joint prior distribution of $(X,\Lambda)$. Firstly, the assessor's prior distribution of \emph{X} is continuous.\xingyu{not just beta...}

\begin{constraint}[continuous prior distribution of $X$]
\label{cons_fully_specified_marginal}
A density function $f(x)$ gives the assessor's prior confidence in pfd bound $u$ -- i.e. $P(X\leqslant u)=\int_{0}^{u}\!f(x)\diff x\ $ for all $u\in[0,1]$.
\end{constraint}
 
Secondly, the assessor does not rule out negatively or positively correlated testing outcomes (see Figure~\ref{fig_RregionAndPhicons}).
\begin{constraint}[confidence in negative correlation]
\label{cons_negative_dependence}
{$\phi_1 \times 100\%$}~confident that the outcomes of successive tests are negatively correlated, i.e. $P(\Lambda < X)=\phi_1$.
\end{constraint}
\begin{constraint}[confidence in positive correlation]
\label{cons_positive_dependence}
{$\phi_2 \times 100\%$}~confident that the outcomes of successive tests are positively correlated, i.e. $P(\Lambda > X)=\phi_2$.
\end{constraint}

Thirdly, the assessor is relatively confident that the i.i.d. assumption holds, and bound $b$ is satisfied, but not so confident as to make testing unnecessary. A straightforward extension of the theorem presented here accounts for less confidence.\xingyu{and relax this PK constraint...}
\kizito{TODO: wouldn't mind figuring out a better way to describe this PK}
\xingyu{Prior confidence in b is sufficient to do the testing whithout breaking things, and at the same time not with quite high confidence making testing unnecessary...}
\begin{constraint}[confidence in bound and independence]
\label{cons_reliable_system}
For a target system pfd $b$, $\phi_1 \leqslant P(X\leqslant b) \leqslant 1-\phi_2$
\end{constraint}
\xingyu{I wonder if we would like to present this restriction as a PK, as it may make our contribution looks a bit restricted? I was thinking: our problem statement is general enough (with only PK1 and PK2), although we only present the numerical results for the case with PK3 (because they are practical). While we also have the analytical result for the cases without PK3 (already in the appendix?), but to make the paper less overwhelming we choose to omit the numerical results. I feel a footnote/short-paragraph in the experiments section will do the job? Or I might be being naive or crafty...}\kizito{Yes, I see that PK3 could be criticised as being too restrictive. But the proof in the appendix uses PK3 -- specifically, PK3 means that the ``$f(x)$'' integral constraints in the ``argmin'' sub-optimisations are guaranteed to exist for some $r<s\leqslant b$ and $b\leqslant v<w$. So the results and proofs of the paper are really for the PK3 case. Like you suggest, let's think about how and where in the paper we make it clear that PK3 is not really that much of a restriction, e.g. pointing out how the cases where PK3 is violated are trivial extensions of the Theorem we present}

An assessor with these 4 beliefs has conservative posterior confidence bounds given by this theorem (see appendix \ref{sec_app_C}):

\begin{theorem*}
Let $\mathcal D$ be the set of all prior distributions over $\mathcal R$ and assume $0<b<\sfrac{1}{2}$. Using \eqref{eqn_Klotzmodelpostconf}, the optimisation problem
\begin{align*}
&\inf\limits_{\mathcal D} P(\,X\leqslant b \mid n\mbox{ demands without failure} ) \\
\mbox{s.t.} \,\,\,\,\,\,\,\,\,\,\,&PK\ref{cons_fully_specified_marginal},\,\,PK\ref{cons_negative_dependence},\,\,PK\ref{cons_positive_dependence},\,\,PK\ref{cons_reliable_system}
\end{align*}
is solved by the prior in Figure~\ref{fig_continousmarginal}, since the infimum equals the value of $P(\,X\leqslant b \mid n\mbox{ demands without failure} )$ for this prior. The infimum takes the form $\frac{1}{1 + Q}$, where $Q$ is

\begin{align}
\frac{\int_{b}^{1}\!\left((1-x)^n{\bf 1}_{x\in(b,c_1^{\ast})\cup(c_2^{\ast},1)}\ +\ (1-x) {\bf 1}_{x\in(c^{\ast}_1,c^{\ast}_2)}\right)f(x)\diff x}{\int_{0}^{b}\!\left((1-x)^n {\bf 1}_{x\in(0,c^1_{\ast})\cup(c^2_{\ast},b)}\ +\ \frac{(1-2x)^{n-1}}{(1-x)^{n-2}}{\bf 1}_{x\in(c^1_{\ast},c^2_{\ast})}\right)f(x)\diff x}
\label{eqn_CBIsoln_continousmarginal_1}
\end{align}
and the pfds $c^1_{\ast}, c^2_{\ast}, c_1^{\ast}, c_2^{\ast}$, are the unique values of $r,s,v,w$, respectively, that solve
\[\begin{array}{ccc}
&\argmin\limits_{0\leqslant r<s\leqslant b}\mid g_l(r)-g_l(s)\mid,\,\argmin\limits_{b\leqslant v<w\leqslant 1}\mid g_u(v)-g_u(w)\mid&\\
\mbox{s.t.} \,\,\,&g_l(0)\leqslant g_l(r),\,\,\,\,g_l(b)\leqslant g_l(s)\,,&\\
&g_u(b)\leqslant g_u(v),\,\,\,\,g_u(1)\leqslant g_u(w)\,,&\\ 
&\int_{r}^{s}\!f(x)\diff x = \phi_1\,,\,\,\,\,\int_{v}^{w}\!f(x)\diff x = \phi_2\,,& \\
&0\leqslant r < s \leqslant b\leqslant  v < w \leqslant 1&
\end{array}\]
for $g_l:[0,\sfrac{1}{2}]\to [0,1]$ and $g_u:[0,1]\to [0,1]$ defined as
\begin{align}
g_l(x) &= (L(x,x;n)-L(x,0;n)){\bf 1}_{x\in[0,\frac{1}{2}]}\nonumber\\
g_u(x) &= (L(x,1;n)-L(x,x;n)){\bf 1}_{x\in[0,1]}
\label{eqn_thegfunctions}
\end{align}
\end{theorem*}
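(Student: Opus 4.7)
The plan is to reduce the infinite-dimensional optimisation over $\mathcal D$ to two one-dimensional, linear-fractional problems that can be solved by a Neyman--Pearson-type argument. I would begin by rewriting the posterior in \eqref{eqn_Klotzmodelpostconf} as $1/(1+B/A)$, with $A := \int_0^b h(x)f(x)\,\diff x$ and $B := \int_b^1 h(x)f(x)\,\diff x$, where $h(x) := \MyExp[L(X,\Lambda;n)\mid X=x]$. Because PK\ref{cons_fully_specified_marginal} fixes the marginal density $f$, minimising the posterior over $\mathcal D$ is equivalent to simultaneously minimising $A$ and maximising $B$ by varying the conditional distribution of $\Lambda\mid X$, subject to the constraints $\int_0^1 P(\Lambda<x\mid X=x)f(x)\,\diff x = \phi_1$ and $\int_0^1 P(\Lambda>x\mid X=x)f(x)\,\diff x = \phi_2$ imposed by PK\ref{cons_negative_dependence} and PK\ref{cons_positive_dependence}.

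I would then establish, by direct differentiation of \eqref{eqn_xKlotzlklhdFn_maintxt}, that $L(x,\lambda;n)$ is strictly increasing in $\lambda$ on $\mathcal R$ (for $n\geqslant 2$). Consequently, for each fixed $x$ the largest value of $L$ over $\{\lambda>x\}$ is attained at $\lambda=1$ (giving $L(x,1;n)=1-x$), the smallest over $\{\lambda<x\}$ is attained at $\lambda_{\min}(x):=\max\{0,(2x-1)/x\}$ (giving $L(x,0;n)=(1-2x)^{n-1}/(1-x)^{n-2}$ whenever $x\leqslant\sfrac{1}{2}$), while on $\{\lambda=x\}$ we have $L(x,x;n)=(1-x)^n$. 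Any optimising prior therefore concentrates the conditional distribution of $\Lambda\mid X=x$ on at most the three atoms $\{\lambda_{\min}(x),x,1\}$.

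Next I would decouple the problem across $b$. To minimise $A$ one should place the entire probability $\phi_1$ of $\{\Lambda<X\}$ inside $[0,b]$ (pushing $L$ down where it feeds $A$); symmetrically, to maximise $B$ one places the entire probability $\phi_2$ of $\{\Lambda>X\}$ inside $(b,1]$ (pushing $L$ up where it feeds $B$). PK\ref{cons_reliable_system} is precisely what guarantees these placements are feasible. Letting $p_L(x)$ denote the conditional mass at $\lambda_{\min}(x)=0$ on $[0,b]$ and $p_U(x)$ the mass at $\lambda=1$ on $(b,1]$, one obtains $A = \int_0^b (1-x)^n f(x)\,\diff x - \int_0^b p_L(x)g_l(x)f(x)\,\diff x$ and $B = \int_b^1 (1-x)^n f(x)\,\diff x + \int_b^1 p_U(x)g_u(x)f(x)\,\diff x$, with $g_l,g_u$ as in \eqref{eqn_thegfunctions}. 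The task thus reduces to maximising $\int_0^b p_L g_l f\,\diff x$ over measurable $p_L\colon[0,b]\to[0,1]$ with $\int_0^b p_L f\,\diff x=\phi_1$, and the analogous problem for $p_U$ on $(b,1]$ with mass $\phi_2$.

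A Neyman--Pearson / fractional-knapsack argument shows the maximisers are indicators of superlevel sets, $p_L^*(x) = \mathbf{1}_{\{g_l(x)>c\}}$ and $p_U^*(x) = \mathbf{1}_{\{g_u(x)>c'\}}$, with the thresholds $c,c'$ fixed by the mass constraints. The main obstacle, and the most delicate step, is to show that each of these superlevel sets is a single interval $(c^1_\ast,c^2_\ast)\subset[0,b]$ (resp.\ $(c_1^\ast,c_2^\ast)\subset[b,1]$), so that the theorem's argmin characterisation is recovered. I would attempt this via strict unimodality of $g_l$ on $[0,\sfrac{1}{2}]$ and of $g_u$ on $[0,1]$: combined with the boundary values $g_l(0)=0$ and $g_u(0)=g_u(1)=0$, unimodality turns every positive superlevel set into a connected interval whose endpoints satisfy $g_l(c^1_\ast)=g_l(c^2_\ast)$, and the side conditions $g_l(0)\leqslant g_l(c^1_\ast)$ and $g_l(b)\leqslant g_l(c^2_\ast)$ in the argmin cover the boundary cases where the interval abuts $0$ or $b$. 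Uniqueness of the minimisers then follows from strict unimodality. Finally, substituting $p_L^\ast$ and $p_U^\ast$ back into $A$ and $B$ yields the claimed form $\tfrac{1}{1+Q}$ with $Q$ as in \eqref{eqn_CBIsoln_continousmarginal_1}.
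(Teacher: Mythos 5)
Your proposal is correct and lands on the same worst-case prior and the same expression for $Q$, but it takes a genuinely different route from the paper's. The paper discretises $\mathcal R$ into dyadic vertical strips, builds a sequence of dominating discrete-marginal priors by pushing mass within each strip toward the extremal values of $L$, improves these further by pairwise mass swaps between strips (the direction of each swap dictated by the sign of a derivative in a perturbation parameter $\Delta$, which is exactly where $g_l$ and $g_u$ enter), and finally recovers the integral form of $Q$ by dominated convergence. You instead disintegrate the joint prior into the fixed marginal $f$ and the conditional law of $\Lambda\mid X$, use monotonicity of $L(x,\cdot\,;n)$ to collapse each conditional onto atoms, and solve the residual allocation of the $\phi_1,\phi_2$ budgets as a Neyman--Pearson/linear-fractional programme whose optimisers are superlevel sets of $g_l$ and $g_u$ -- connected intervals by unimodality, which is precisely the theorem's $\argmin$ characterisation. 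Your route is shorter and makes the structure of the optimum more transparent; the paper's strip-swapping is essentially your exchange argument in discretised form. Two points deserve care if you flesh this out. First, the attainment issue that the paper's limiting construction is built to handle: when you argue that no $\{\Lambda<X\}$ mass should sit at $x>b$ (and no $\{\Lambda>X\}$ mass at $x\leqslant b$), the pointwise optimum of $L$ over the open set $\{\lambda<x\}$ (resp.\ $\{\lambda>x\}$) is a supremum approaching $L(x,x;n)$ that is not attained, so the comparison must be phrased as a strict improvement of any feasible competitor (using $g_l>0$ on $(0,\sfrac{1}{2})$ and $g_u>0$ on $(0,1)$, with feasibility of the relocation guaranteed by PK4) rather than as a comparison of attained optima. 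Second, the strict unimodality of $g_l$ and $g_u$ does real work in your connectedness step and is not free: $g_u(x)=(1-x)-(1-x)^n$ yields $g_u'(x)=n(1-x)^{n-1}-1$ and is immediate, but $g_l(x)=(1-x)^n-(1-2x)^{n-1}/(1-x)^{n-2}$ needs the relative-convexity argument the paper sketches (or an equivalent calculus proof) before the superlevel sets can be declared intervals and the minimisers unique.
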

Numerical estimates for $c^1_{\ast}$, $c^2_{\ast}$, $c_1^{\ast}$, $c_2^{\ast}$ may be computed using ``root-finding'' algorithms such as that in Appendix \ref{app_D}. 

\begin{figure*}[h!]
     \centering
     \begin{subfigure}[b]{0.31\textwidth}
         \centering
         \includegraphics[width=\textwidth]{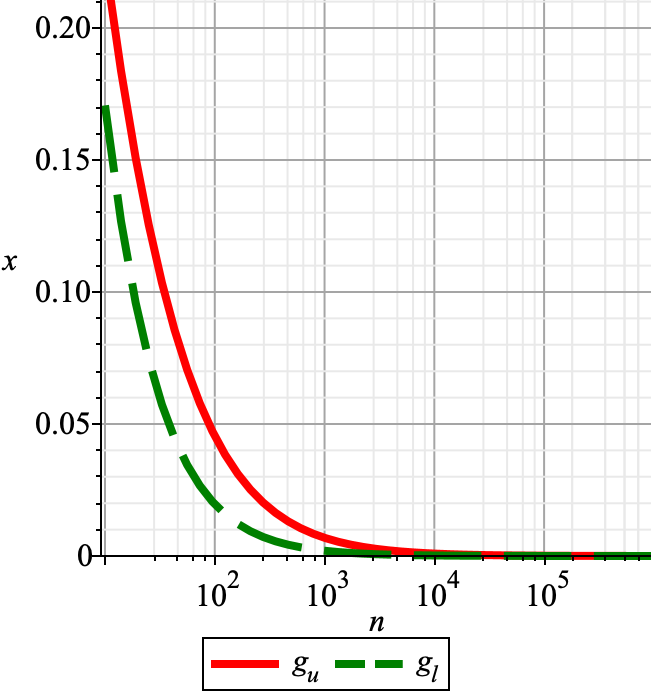}
         \caption{stationary points for $g_l$ and $g_u$}
         \label{fig_glocalmaximaxvalues}
     \end{subfigure}
     \hfill
     \begin{subfigure}[b]{0.31\textwidth}
         \centering
         \includegraphics[width=\textwidth]{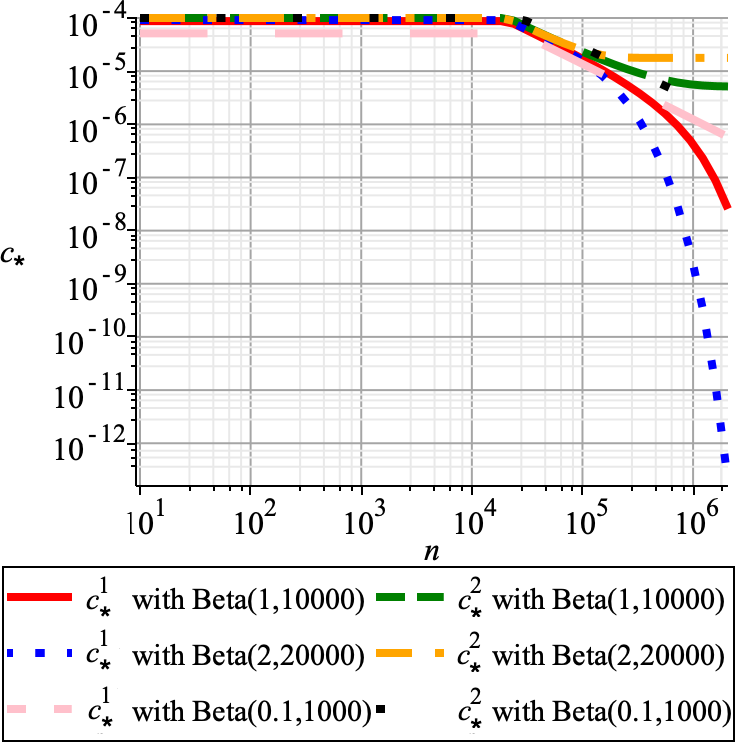}
         \caption{$\phi_1=0.05$, $\phi_2=0.05$, $b=0.0001$}
         \label{fig_cstarlowerplots}
     \end{subfigure}
     \hfill
     \begin{subfigure}[b]{0.31\textwidth}
         \centering
         \includegraphics[width=\textwidth]{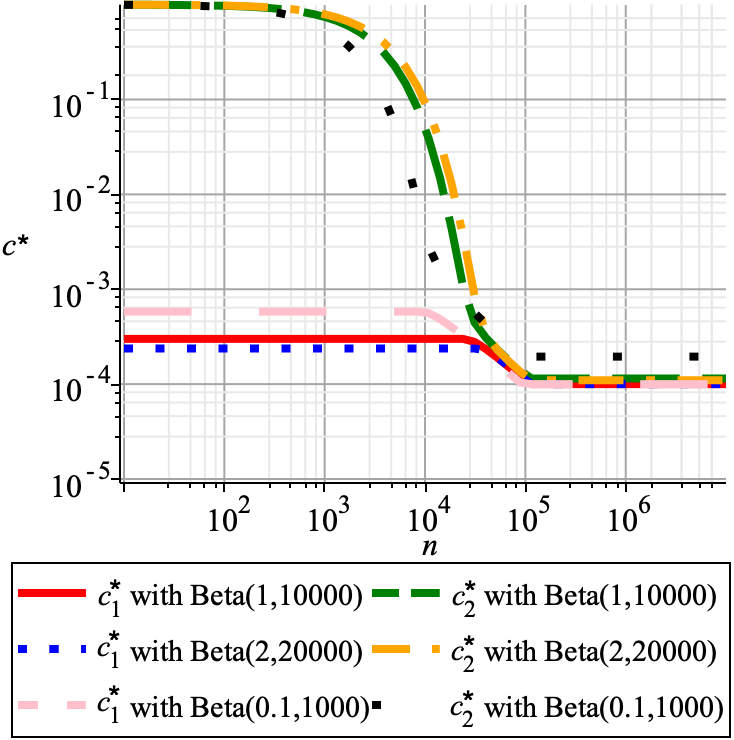}
         \caption{$\phi_1=0.05$, $\phi_2=0.05$, $b=0.0001$}
         \label{fig_cstarupperplots}
     \end{subfigure}
        \caption{The stationary points of $g_l$, $g_u$ and the \emph{pfd}s $c^1_{\ast}$, $c^2_{\ast}$, $c_1^{\ast}$, $c_2^{\ast}$ are all monotonically decreasing functions of $n$.}
        \label{fig_sensitivity_analysis_row1}
\end{figure*}

If the assessor has no doubts about the i.i.d. assumption, this CBI result is given by traditional Bayesian inference. That is,
\begin{corollary*}
Let $Q$ be given by \eqref{eqn_CBIsoln_continousmarginal_1} in the theorem, when $\phi_1=\phi_2=0$. Then the infimum is given by traditional Bayesian inference, using the prior density $f(x)$. That is,
\begin{align}
\label{eqn_CBIbecomesBayesianInf}
\inf\limits_{\mathcal D} P(X\!\leqslant\! b \mid n\mbox{ demands without failure} ) 
\!=\! \frac{\int_{0}^b (1-x)^n f(x)\diff x}{\int_{0}^1 (1-x)^n f(x)\diff x}
\end{align}
\end{corollary*}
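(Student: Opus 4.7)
The plan is to argue that setting $\phi_1=\phi_2=0$ pins every prior in $\mathcal{D}$ to the diagonal $\{\lambda=x\}$ of $\mathcal{R}$, on which the Klotz likelihood reduces to the i.i.d.\ Bernoulli likelihood, so the posterior confidence (\ref{eqn_Klotzmodelpostconf}) is uniquely determined (no infimum is actually needed) and coincides with the standard Bayesian ratio. First I would invoke PK\ref{cons_negative_dependence} and PK\ref{cons_positive_dependence} with $\phi_1=\phi_2=0$ to conclude $P(\Lambda<X)=P(\Lambda>X)=0$, and therefore $P(\Lambda=X)=1$ under every joint prior in $\mathcal{D}$. For any measurable $h$ this gives $\MyExp[h(X,\Lambda)]=\MyExp[h(X,X)]=\int_0^1 h(x,x)f(x)\diff x$ using PK\ref{cons_fully_specified_marginal}. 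Note that PK\ref{cons_reliable_system} collapses to the trivial bound $0\leqslant P(X\leqslant b)\leqslant 1$ and so plays no role here.

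Applying this identity to $h(X,\Lambda)=L(X,\Lambda;n){\bf 1}_{X\in[0,b]}$ and $h(X,\Lambda)=L(X,\Lambda;n)$ in the numerator and denominator of (\ref{eqn_Klotzmodelpostconf}), and using $L(x,x;n)=(1-x)^n$ from the i.i.d.\ special case noted in Section~\ref{sec_KlotzModel}, yields posterior confidence $\int_0^b(1-x)^n f(x)\diff x\,/\,\int_0^1(1-x)^n f(x)\diff x$. Since this value is the same for every prior in $\mathcal{D}$, the infimum trivially equals it, which is exactly (\ref{eqn_CBIbecomesBayesianInf}).

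A second, consistency check -- which I expect to be the main technical obstacle -- is to reconcile this clean argument with the specialisation of formula (\ref{eqn_CBIsoln_continousmarginal_1}) to $\phi_1=\phi_2=0$. The integral constraints $\int_r^s f(x)\diff x=0$ and $\int_v^w f(x)\diff x=0$ cannot be met under the strict inequalities $r<s$ and $v<w$ whenever $f>0$, so the argmin must be read as a limit as $\phi_1,\phi_2\downarrow 0$. In that limit $c^1_\ast\to c^2_\ast$ and $c_1^\ast\to c_2^\ast$, so the indicators ${\bf 1}_{x\in(c^1_\ast,c^2_\ast)}$ and ${\bf 1}_{x\in(c_1^\ast,c_2^\ast)}$ vanish a.e., while the complementary indicators absorb $(0,b)$ and $(b,1)$ entirely. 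Dominated convergence then gives $Q\to \int_b^1(1-x)^n f(x)\diff x\,/\,\int_0^b(1-x)^n f(x)\diff x$ and $\tfrac{1}{1+Q}$ tends to the right-hand side of (\ref{eqn_CBIbecomesBayesianInf}), confirming the corollary via the theorem's own formula.
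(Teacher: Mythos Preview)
Your proposal is correct. The paper does not give a self-contained proof of the corollary; its justification is the dominated-convergence remark at the end of Section~\ref{sec_consconfbounds}, which is precisely your second ``consistency check'': as $\phi_1,\phi_2\downarrow 0$ one has $c^1_\ast\to c^2_\ast$ and $c_1^\ast\to c_2^\ast$, so the off-diagonal indicators in \eqref{eqn_CBIsoln_continousmarginal_1} vanish and $Q$ collapses to the i.i.d.\ ratio.

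Your first argument is a genuinely different and more elementary route. Rather than specialising the theorem's formula, you observe that PK\ref{cons_negative_dependence} and PK\ref{cons_positive_dependence} with $\phi_1=\phi_2=0$ force $P(\Lambda=X)=1$ for \emph{every} admissible prior, so the Klotz likelihood equals $(1-X)^n$ almost surely and \eqref{eqn_Klotzmodelpostconf} is uniquely determined by $f$ alone. This sidesteps the theorem entirely and shows directly that the feasible set $\mathcal{D}$ degenerates to a single posterior value, making the infimum trivial. The paper's limiting argument, by contrast, establishes the same conclusion as a boundary case of the optimisation machinery, which is useful for continuity in $(\phi_1,\phi_2)$ but is logically heavier. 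Your direct argument also clarifies why no ``argmin'' solution exists at $\phi_1=\phi_2=0$ under the strict inequalities $r<s$, $v<w$: the constraint set is empty there, and the corollary is really a limit statement --- a point the paper leaves implicit but which you flag explicitly.
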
 
However, if the assessor begins operational testing with beliefs captured by PKs \ref{cons_fully_specified_marginal}, \ref{cons_negative_dependence}, \ref{cons_positive_dependence} and \ref{cons_reliable_system}    -- i.e. beliefs that only partially specify a joint prior distribution of $(X,\Lambda)$ -- then the CBI theorem identifies a joint prior that is consistent with these beliefs, is unique up to zero probability events, and that gives the smallest posterior confidence in the \emph{pfd} upper-bound $b$ (see Figure~\ref{fig_continousmarginal}).

This CBI prior assigns probability density only to the indicated thick black line segments in $\mathcal R$; the rest of $\mathcal R$ has zero probability\footnote{Expressing this formally involves projections from $\mathcal R$ to the interval $[0,1]$.} (Appendix \ref{sec_app_C}). For example, prior to testing, the conservative assessor expects negatively, or positively, dependent test outcomes with probabilities $\phi_1=P(c^1_\ast\leqslant X\leqslant c^2_\ast,\,\Lambda=0)=\int^{c^2_\ast}_{c^1_\ast}\!f(x)\diff x$ and $\phi_2=P(c_1^\ast\leqslant X\leqslant c_2^\ast,\,\Lambda=1)=\int^{c_2^\ast}_{c_1^\ast}\!f(x)\diff x$, respectively. Or consider the probabilities of the 2 shaded rectangular regions in Figure~\ref{fig_continousmarginal_example}. The non-zero contributions to these probabilities come from the thick black line segments that intersect these regions; i.e. $P(x_1\leqslant X\leqslant x_2,\, \lambda_1\leqslant \Lambda\leqslant \lambda_2)=\int_{c_{\ast}^2}^{x_2} \!f(x)\diff(x)$ and $P(x_3\leqslant X\leqslant x_4, \,\lambda_3\leqslant \Lambda\leqslant \,1)=\int_{x_3}^{x_4} \!f(x)\diff(x)$.
\kizito{to think of practical examples..; $c\_s$ are functions of numbers of failure-free tests and amount of phi doubts; If I see negatively correlated testing outcomes, the system pfd is likely to lie between $C^1_{*}$ and $C^2_{*}$ }
\xingyu{Would it be useful to write down the worse-case prior PDF as a piece-wise function:
\[   f_{joint}(x,\lambda)=\left\{
\begin{array}{ll}
      \int^{c^1_\ast}_{0}f(x)\diff x & x\leq c^1_{*}, \lambda=x \\
     \phi_1 & c^1_{*}\leq x\leq c^2_{*}, \lambda=0 \\
      \int^{b}_{c^2_\ast}f(x)\diff x & c^2_{*}\leq x\leq b, , \lambda=x \\
      \int^{c_1^\ast}_{b}f(x)\diff x & b\leq x\leq c^{*}_{1}, \lambda=x \\
      \phi_2 & c^{*}_{1}\leq x\leq c^{*}_{2}, \lambda=1 \\
      \int^{1}_{c_2^\ast}f(x)\diff x & c^{*}_{2}\leq x,  \lambda=x \\
      0 & otherwise \\
\end{array} 
\right. \]
Something like this, hoping the reviewers may know what the worst-case prior is even if he cannot follow the ``pop-up'' figure 2b..
}


\subsection{Conservative Beliefs for Failure-free Testing}
 The CBI prior Figure~\ref{fig_continousmarginal} encodes conservative beliefs. In particular, the \emph{pfd} values $c^1_{\ast}$, $c^2_{\ast}$, $c_1^{\ast}$, $c_2^{\ast}$ indicate where (in $\mathcal R$) doubts in the i.i.d. assumption should be placed for conservative posterior confidence. The $\mathcal R$ locations between $(c_\ast^1,0)$ and $(c_\ast^2,0)$ along the $\lambda=0$ edge represent statistical dependence that is \emph{unlikely} to produce failure-free testing \emph{if} the unknown \emph{pfd} actually satisfies the bound. And if the \emph{pfd} does not satisfy the bound, the locations between $(c^\ast_1,1)$ and $(c^\ast_2,1)$ along $\lambda=1$ represent dependence \emph{likely} to produce failure-free testing. Note that the assessor has been unable to rule out these extreme beliefs prior to testing, since these beliefs are consistent with the PKs. 

These beliefs depend on the number $n$ of required failure-free tests; the ``$c$''s become smaller if $n$ becomes bigger. Because, the ``$c$''s are defined with respect to the stationary points of the theorem's ``$g$'' functions -- themselves dependent on $n$. Figure~\ref{fig_glocalmaximaxvalues} plots how the $x$ values for these stationary points tend to zero as $n$ increases. Figure~\ref{fig_cstarlowerplots} shows the consequence of this -- $c_\ast^1$ decreases towards zero and $c_\ast^2$ decreases towards a non-zero value dependent on $f(x)$. Figure~\ref{fig_cstarupperplots} tells a similar story --  $c^\ast_1$ decreases towards the bound $10^{-4}$, and $c^\ast_2$ towards a non-zero value dependent on $f(x)$. \kizito{We need a better explanation here. The question is why do the cs reduce, in practical terms?}This asymptotic behaviour is reasonable: the greater the amount of failure-free operation required during testing, the smaller the \emph{pfd} is expected to be for a system that performs this well (even when one is doubtful of the tests being i.i.d.).


Informally, beliefs in i.i.d. tests ``lie on the boundary'' of the set of conservative beliefs that express doubts about i.i.d. tests, and confidence from beliefs in i.i.d. tests \eqref{eqn_CBIbecomesBayesianInf} is well-approximated by conservative confidence from the theorem. Indeed, by the \emph{dominated convergence theorem} \cite{schilling_2005},  \eqref{eqn_CBIsoln_continousmarginal_1} tends to $\frac{\int_{b}^1 (1-x)^n f(x)\diff x}{\int_{0}^b (1-x)^n f(x)\diff x}$ as $\phi_1$ and $\phi_2$ tend to 0 (since $c^1_\ast$, $c_1^\ast$ tend to $c^2_\ast$, $c_2^\ast$ respectively).


\section{Results: Assessment Using the Theorem}
\label{sec_results}
\xingyu{
\begin{itemize}
    \item As per we did in the DSN21 paper, we need blocks like Example 1, Example 2... In each example block, we should ask explicit/practical questions to show our use cases. For instance, I am brain storming:
    \begin{itemize}
        \item Example 1: Give those PKs, how much confidence we can have in a required pfd bound after seeing $n$ failure-free tests...figure 3 a and b.
        \item Example 2: Similar as Example 1, but how sensitive to $\phi_1$?, figure 3 c and d. \kizito{what's the impact of beliefs in negatively dependent testing outcomes on posterior confidence... to polish later.. }
        \item Example 3: Similar as Example 1, but how sensitive to $\phi_2$?, figure 3 e and f. \kizito{what's the impact of beliefs in positively dependent testing outcomes on posterior confidence... to polish later.. }
        \item Example 4: Regarding figure 4... maybe an example on the practical meaning behind c1star, etc... 
        \item Example 5: Take the Klotz model and ask how many tests we need to have 95\% confidence in 0.0001 in classical method... without new figures, rather discuss it as NWTES paper...
    \end{itemize}
\end{itemize}
}
\subsection{Practical Context and Guidance}




The theorem gives conservative confidence in a \emph{pfd} bound $b$, when an on-demand system is subjected to black-box operational testing. A bound such as $b=10^{-4}$; a target \emph{pfd} used in the assessment of the Sizewell-B nuclear power plant safety protection system in the United Kingdom\footnote{This was the target \emph{pfd} for a hardwired secondary safety subsystem; a software-based primary safety subsystem had a more modest $10^{-3}$ target \emph{pfd}.}\cite{Hunns_1992,Littlewood1998TheUseOfComps}. To gain $99\%$ confidence in this bound -- using the i.i.d. assumption under a classical statistical inference approach -- requires between $10^{4}$ and $10^{5}$ test demands \cite{parnas_1990,Littlewood1998TheUseOfComps}. In this section, we will use similar orders of magnitude of test demands to illustrate the theorem's use. 

In particular, the theorem can be used during acceptance testing to check the robustness of confidence \eqref{eqn_CBIbecomesBayesianInf} to doubts about the i.i.d. assumption; we illustrate how to do this in the rest of this section. When applying statistical techniques like the theorem, one may follow the guidance from Littlewood and Strigini~\cite{strigini_guidelines_1997} and Lyu \emph{et al}.~\cite{MichaelLyu_1996} on performing statistical testing. See Parnas \emph{et al}.~\cite{parnas_1990} for additional discussion on evaluating safety-critical software, including random test-case (e.g. demand) selection. For nuclear safety applications in particular, \cite{LicensingOfICSoftware,atwood2003handbook} give guidance on reliability assessment using statistical techniques. 

More generally, the theorem can be applied in any black-box testing phase where failure-free operation can be used to gain confidence (i.e. \eqref{eqn_CBIbecomesBayesianInf}) in the software. Such testing would typically involve subjecting (some part of) the software to a large number of randomly generated demands in a simulated environment. For example, in accordance with integrity level 4 (see IEEE 1012:2016~\cite{IEEEstd10122016}), one may apply the theorem during component, or integration, testing phases for safety-critical software. For nuclear safety-critical software testing phases, see also IEC 60880~\cite{IECstd60880}.

\begin{figure*}[h!]
     \centering
     \begin{subfigure}[b]{0.31\textwidth}
         \centering
         \includegraphics[width=\textwidth]{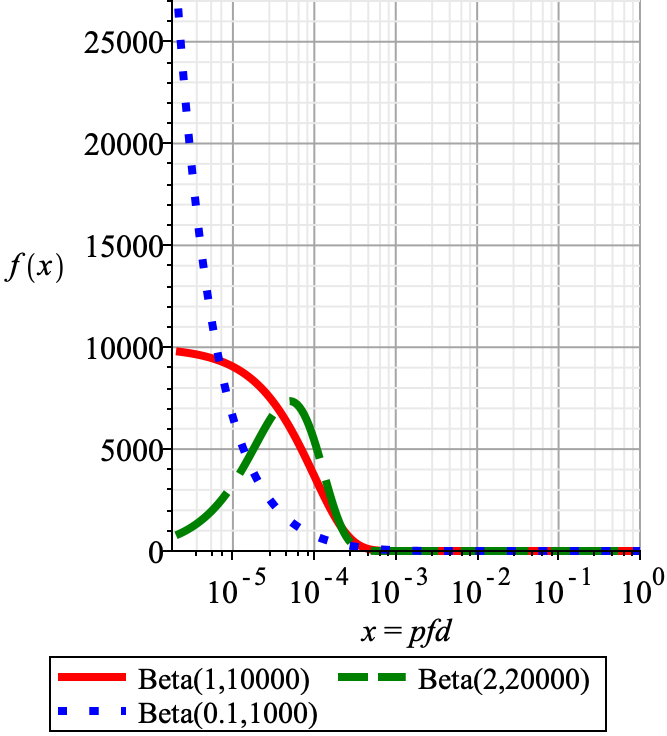}
         \caption{Beta prior distributions of \textit{pfd}}
         \label{fig_BetaPriors}
     \end{subfigure}
     \hfill
     \begin{subfigure}[b]{0.31\textwidth}
         \centering
         \includegraphics[width=\textwidth]{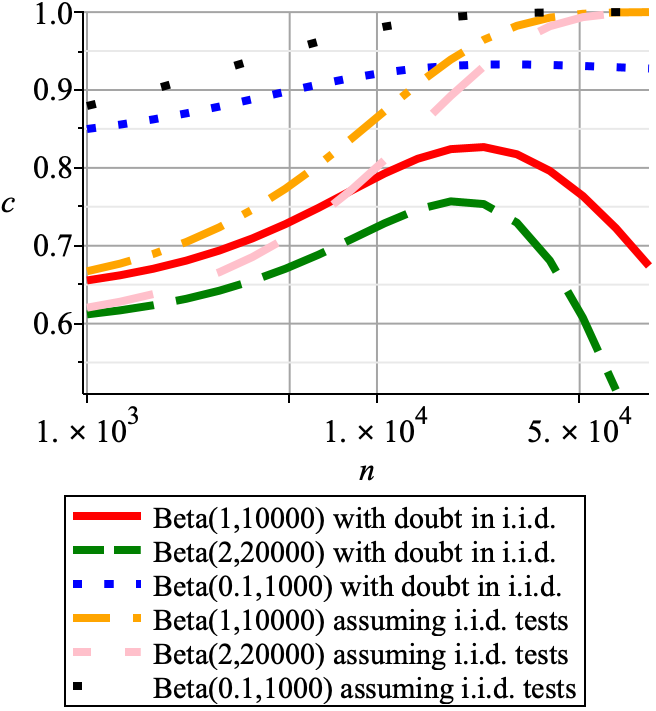}
         \caption{$\phi_1=0.05$, $\phi_2=0.05$, $b=0.0001$}
         \label{fig_vary_beta}
     \end{subfigure}
     \hfill
    \begin{subfigure}[b]{0.31\textwidth}
         \centering
         \includegraphics[width=\textwidth]{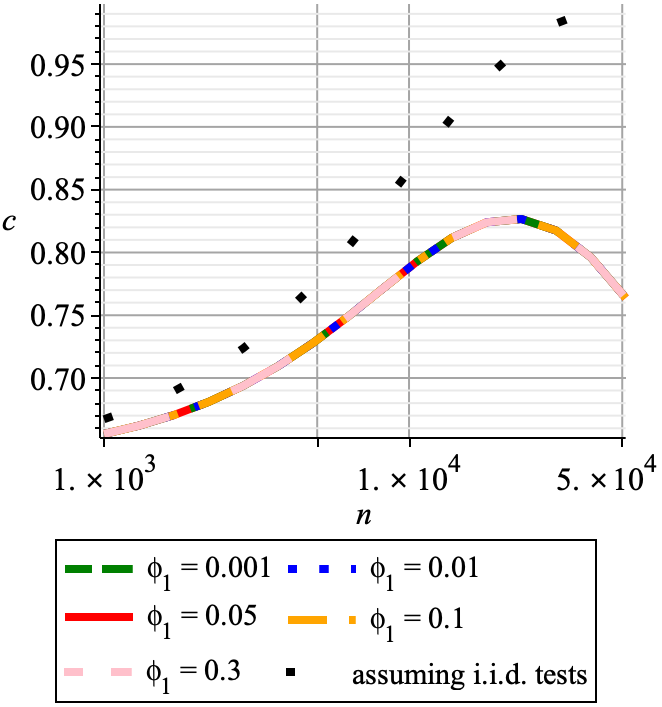}
         \caption{$\phi_2=0.05$, $b=0.0001$,  $\mathrm{Beta}(1,10000)$}
         \label{fig_vary_phi1_monotonic_beta}
     \end{subfigure}
     \hfill
     \par\bigskip
     \begin{subfigure}[b]{0.31\textwidth}
         \centering
         \includegraphics[width=\textwidth]{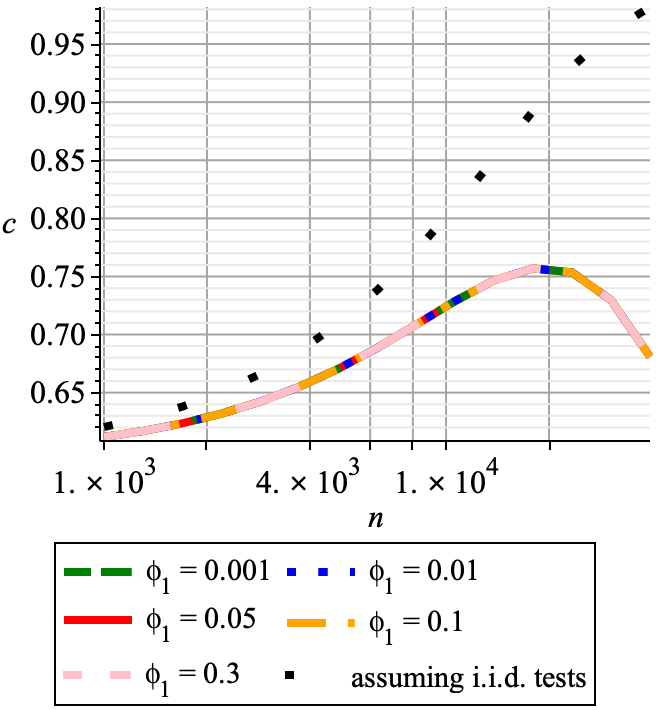}
         \caption{$\phi_2=0.05$, $b=0.0001$,  $\mathrm{Beta}(2,20000)$}
         \label{fig_vary_phi1_unimodal_beta}
     \end{subfigure}
     \hfill
     \begin{subfigure}[b]{0.31\textwidth}
         \centering
         \includegraphics[width=\textwidth]{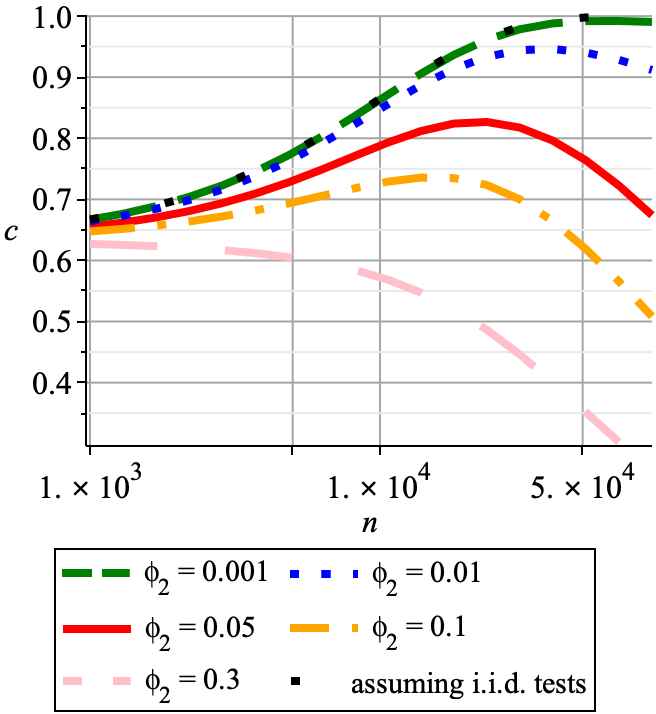}
         \caption{$\phi_1=0.05$, $b=0.0001$,  $\mathrm{Beta}(1,10000)$}
         \label{fig_vary_phi2_monotonic_beta}
     \end{subfigure}
     \hfill
     \begin{subfigure}[b]{0.31\textwidth}
         \centering
         \includegraphics[width=\textwidth]{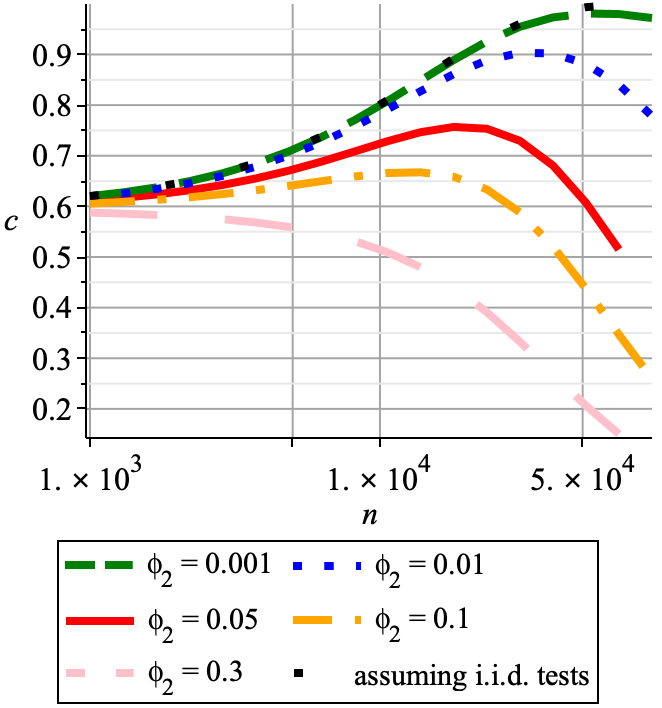}
         \caption{$\phi_1=0.05$, $b=0.0001$,  $\mathrm{Beta}(2,20000)$}
         \label{fig_vary_phi2_unimodal_beta}
     \end{subfigure}
        \caption{Sensitivity analyses showing which forms of PK have the biggest impact on (conservative) posterior confidence~$c$.}
        \label{fig_sensitivity_analysis_row3}
\end{figure*}





\subsection{Examples: Prior Beliefs and Confidence in a Bound}
The analyses in the rest of this section show how: {\bf i)} confidence based on the i.i.d. assumption can be very optimistic as failure-free tests accumulate; {\bf ii)} some forms of doubt about the i.i.d. assumption significantly impact confidence, while other forms do not; {\bf iii)} surprisingly, failure-free testing can eventually undermine confidence in the system satisfying the bound.

\begin{table}[h]
		\centering
		\caption{ A summary of 3 Beta prior distributions of \emph{pfd} \normalsize}
		\label{table_betapriors}
		\begin{tabularx}{0.33\textwidth}{cccc}
				\toprule[2pt]
				$\alpha$ & $\beta$ & $\MyExp[X]=\frac{\alpha}{\alpha+\beta}$ & $P(X\leqslant 10^{-4})$ \\ 
				\midrule
				2& 20000 & 0.0001	& 0.6\\
				1& 10000 & 0.0001	&	0.63\\
				0.1& 1000 & 0.0001 	& 0.83\\
				  		
				\bottomrule[2pt]
		\end{tabularx}
	\end{table}

The PK\ref{cons_fully_specified_marginal} density $f(x)$ can be from \emph{any family} of continuous distributions over the interval $[0,1]$. Beta densities are often used in practice \cite{atwood2003handbook,WMiller_1992_BayesianReliabilityAssesssment,LittlewoodPopov_2002_AssessingTheReliabilityDiversesoftware}. Consider 3 alternative $f(x)$; the Beta distributions in Figure~\ref{fig_BetaPriors} with parameters/properties in Table.~\ref{table_betapriors}. Let these represent prior beliefs of 3 assessors that differ in how confident they are that the system satisfies the bound. 

Suppose the assessors are a little skeptical of the tests being i.i.d. (e.g. $\phi_1=\phi_2=0.05$).  Figure~\ref{fig_vary_beta} shows how posterior confidence~$c$ evolves as operational testing evidence mounts. For each Beta prior, the posterior confidence \eqref{eqn_CBIbecomesBayesianInf} under an i.i.d. assumption is plotted against the CBI posterior confidence $\frac{1}{1+Q}$ for our skeptic. In all cases, confidence from assuming independence is initially comparable to conservative confidence -- the relevant pair of curves for each Beta prior almost overlap initially. 

However, as more failure-free testing is observed -- i.e. as $n$ grows -- ``i.i.d.''-based confidence grows and tends towards certainty. While conservative confidence grows more slowly, reaches a maximum, and then tends towards zero. Appendix \ref{sec_app_E} proves this zero limiting behaviour will occur whenever an assessor allows for the possibility of positively correlated tests (i.e. $\phi_2>0$). In the limit of large $n$, ``i.i.d.''-based confidence can be the \emph{most optimistic} confidence \emph{can} be, while still remaining consistent with an assessor's informed views/beliefs about the unknown \emph{pfd}.     

If the evidence available to an assessor before testing justifies being very confident in the bound, then the initial closeness between the ``i.i.d.''-based posterior confidence and the confidence given by CBI can continue for longer before ultimately diverging. In Figure~\ref{fig_vary_beta} the pair of curves for the Beta(0.1,1000) prior -- i.e. for the very confident assessor -- stay closer together for longer, compared to the curves for the least confident assessor with prior Beta(2,20000). The greater the prior confidence in the bound, the greater the posterior confidence when testing begins (i.e. small $n$).

\subsection{Sensitivity Analysis: How Failure-free Tests and Skepticism about i.i.d. Tests Impact Confidence}
The nature of an assessor's skepticism determines whether their conservative posterior confidence ultimately grows or shrinks during operational testing. At one extreme, some forms of doubt have no noticeable impact on confidence when no failures are observed during testing. The possibility (however likely) of negatively correlated tests has no apparent effect on conservative confidence. For example, an assessor might intentionally seek to ``stress'' the software during testing, by randomly including a disproportionate number of test demands that are thought will likely cause the software's failure. If stressful demands are adequately interspersed with significantly less stressful ones, one might expect the testing outcomes (i.e. the software's successes/failures) to exhibit some negative dependence (i.e. non-zero $\phi_1$) -- so a failure is quickly followed by successes, then another failure relatively soon afterwards, etc. However, when no failures occur, Figures~\ref{fig_vary_phi1_monotonic_beta},  \ref{fig_vary_phi1_unimodal_beta} show that the ``rise and fall'' of CBI confidence in Figure~\ref{fig_vary_beta} is unaffected by varying one's prior confidence in negative dependence. Intuitively, the more successes occur, the less likely these are from a system undergoing negatively dependent tests.

At the other extreme are positively correlated tests. Figures~\ref{fig_vary_phi2_monotonic_beta}, \ref{fig_vary_phi2_unimodal_beta} both show that the smaller $\phi_2$ is, the closer conservative confidence gets to the confidence under the i.i.d. assumption. When $\phi_2=0$, conservative confidence grows to certainty as the number of successes grows (Appendix~\ref{sec_app_E}). While the larger $\phi_2$ is, the more conservative confidence becomes. Here, confidence in positive correlations (i.e. large $\phi_2$) may be due to pessimistic reasons for the failure-free tests -- i.e. ``success clustering'' can occur even if the software is unreliable. The tests could be unrepresentatively ``easy'' for the software to correctly respond to, or the test oracle is incorrect so failures go undetected \cite{littlewood_use_2007},\cite{Barr_Oracleproblem_2015}.

So, failure-free testing can undermine one's confidence in a system's \emph{pfd}. Such conservatism is not unique to CBI -- even with classical inference, confidence bounds can be quite conservative initially, becoming optimistic (compared to the CBI bounds) after many tests. Indeed, using the Klotz likelihood \eqref{eqn_xKlotzlklhdFn_maintxt} when $\lambda=1$, the probability of succeeding on all $n$ tests (despite the \emph{pfd} being worse than $10^{-4}$) is at most $(1-10^{-4})=0.9999$. That is, a system with a \emph{pfd} worse than $10^{-4}$ may be almost certain to succeed on all tests \emph{if} the tests are strongly positively correlated. This bleak result holds \emph{for all} $n$; so, increasing the number of failure-free tests does not increase one's confidence in the system.

\section{Discussion}
\label{sec_discussion}
\subsection{Skepticism about Model Assumptions}
Software reliability assessments should be conservative: to wit, only when test results stand up to the most critical scrutiny can confidence be justifiably placed in the system satisfying a \emph{pfd} bound. Conservative assessments require a skeptical assessor. In Bayesian terms, our assessor holds conservative beliefs about what the evidence implies for a system's reliability, and about the validity of statistical modelling assumptions. 

This paper illustrates a general, incremental approach to dealing with doubts about \emph{any} statistical model assumptions: we offer a demonstrably conservative form of Draper's ideas \cite{Draper_1995}. For a model property one is doubtful of, one can check the sensitivity of claims based on the model by using a slightly more general model (that has the original model as a special case and weakens the property in question) for inference. ``Slight'' model generalisations keep models from becoming unnecessarily complex, ensure generalisations cover all scenarios covered by the original models, and minimize eliciting increasingly complex prior distributions.

This incremental approach is a ``win-win''. If the i.i.d. assumption is \emph{not} too optimistic, ``i.i.d.''-based confidence doesn't depart significantly from the CBI theorem's conservative confidence based on the Klotz model. If the i.i.d. assumption \emph{is} too optimistic, sensitivity analysis using the theorem can reveal this -- in such circumstances, caution is warranted when relying on ``i.i.d.''-based reliability claims. And if, in turn, one has doubts about the Klotz model, then a generalisation of the Klotz model can be used to check if ``Klotz model''-based confidence is sensitive  to doubts. 

\subsection{Limitations and Future Work}
Let us highlight some Klotz model shortcomings. It does not distinguish between different success/failure types. Future work might consider using the models of \cite{bondavalli_dependability_1995,bondavalli_modelling_1997,strigini_testing_1996} with CBI, to check the robustness of ``Klotz model''-based claims. These models account for the cumulative impact of benign failures. 


The Klotz model uses the relative sizes of $x$ and $\lambda$ to characterise \emph{all} pairs of successive Bernoulli trials as being identically positively, negatively, or zero correlated. 
Consequently, the Klotz model is unable to express non-stationary dependence, such as may be due to software updates that remove, or inadvertently add, faults to the software. The model cannot capture dependence across time either; such as periodic correlations over relatively short, or relatively long, runs of demands. Such periodicity can arise if demands that cause software failure are more likely at certain times when (or certain locations where) operating conditions tend to be more stressful (e.g. for software ensuring aircraft safety, unfavourable weather along a flight path may be more likely at certain times of the year, or more likely along certain flight paths). CBI models with time-dependent correlations are worth exploring.

On the use of \emph{pfd}s we make the following comment. When the failure process is stationary, \emph{pfd}s make sense. The probability of the system failing on the $n$-th demand is the same for all $n$. But for time-dependent failure probabilities, there are more suitable dependability measures -- such as the probability of future failure-free operation. Strigini \cite{strigini_testing_1996} makes a related point in a classical inference context. Even in the present context of posterior confidence bounds on \emph{pfd}, it's worth investigating whether other dependability measures are more/less robust to i.i.d. doubts.  

\xingyu{We might want to say more about this.}
We have formalised (via PK\ref{cons_fully_specified_marginal}) and illustrated how \emph{any} Bayesian reliability assessment using a continuous prior, $f(x)$, can conservatively incorporate doubts about ``i.i.d.''. However, when eliciting $f(x)$ proves too challenging, future work could extend the theorem to work with partially specified $f(x)$. For example, prior to testing, one might justifiably have some confidence in the software containing no faults \cite{littlewood_reasoning_2012}. In effect, $f(x)$ becomes discontinuous, with a non-zero probability of the \emph{pfd} being zero. For such scenarios, preliminary results suggest significantly better agreement between ``i.i.d.''-based and ``Klotz model''-based posterior confidence, even with a very large number of successful tests.

In general, the ``elicitation challenge'' remains an open problem for Bayesian approaches. In light of this, best practice approaches should be followed when eliciting PKs \cite{1994_NuRegTechreport,OHagan_2006}, while sensitivity analyses (as illustrated in Section~\ref{sec_results}) is crucial and practical for checking the robustness of confidence to PK changes.


This work has not considered model selection or validation. One might envisage applying CBI to conservatively gain confidence in the i.i.d. assumption, or using Bayes factors and CBI to conservatively determine which modelling assumptions lead to more trustworthy predictions about future system reliability.

The theorem could be extended to account for failures during testing, e.g. when assessing machine learning applications, or extended to support conservative claims for software modules in a fault-tolerant configuration (e.g. extending Singh \emph{et al}. \cite{singh_2001_BayesianAssessment}).




\section{Conclusions}
\label{sec_conc}
When assessing software using operational testing it is natural to ask, ``is it appropriate to assume the software's failures and successes arise in an i.i.d. manner?''. For many practical scenarios there are well-known reasons to doubt the i.i.d. assumption. A few statistical models which weaken the assumption have been proposed for use in reliability assessments~\cite{chen_binary_1996,goseva_popstojanova_failure_2000,bondavalli_dependability_1995,bondavalli_modelling_1997}. However, none of these proposals allow an assessor to remain unsure about whether i.i.d. holds or not, nor allow the assessor to see the impact of their uncertainty on their confidence in a \emph{pfd} bound. This, despite it often being the case that an assessor may have good reason to believe in i.i.d., but not enough reason to be certain that it holds.  Furthermore, these proposals do not directly support \emph{conservative} reliability claims.  

Using conservative Bayesian techniques -- in particular, CBI -- we show how doubts about i.i.d. can be formally included in software reliability assessments (see Sections~\ref{sec_consconfbounds} and \ref{sec_results}). In this way, we obviate the need imposed by the previous proposals -- the need for assessors to either assume/conclude that the software test outcomes are i.i.d., or assume/conclude that they aren't. Instead, our method allows a skeptical assessor's confidence in a target \emph{pfd}, and their confidence in (and doubts about) i.i.d., to grow or shrink in response to seeing the software operate without failure.  

Moreover, CBI's conservative confidence bounds continuously invite our assessor to be skeptical, and to question whether seemingly favourable reliability evidence from testing does, in fact, corroborate actual reliability. For example, while failure-free operation is generally an indicator of a desirable \emph{pfd}, our results highlight why this might be (at best) a sanguine view in some situations. There may be undesirable reasons for why failure-free operation is occurring; reasons that ultimately undermine one's confidence in the \emph{pfd} being sufficiently small (see section \ref{sec_results}). CBI, by weakening the i.i.d. assumption and producing conservative confidence bounds, can call into question the representativeness of failure-free operation (as indicating a reliable system). When this happens, it's incumbent on the assessor to rule out potential problems during testing that ``masquerade'' as failure-free operation, and to incorporate these efforts into any further use of CBI. 

\section*{Acknowledgments}
We are grateful to the anonymous reviewers whose comments were very helpful in improving the presentation. We are also grateful to Bev Littlewood for giving very helpful feedback on an initial draft of this paper. This work was partly funded by the European Union’s Horizon 2020 Research and Innovation Programme under grant agreement No 956123, and by the UK EPSRC through the End-to-End Conceptual Guarding of Neural Architectures [EP/T026995/1]. Xingyu Zhao's contribution is partially supported through Fellowships at the Assuring Autonomy International Programme.

\bibliographystyle{IEEEtran}
\bibliography{ref}

\newpage

\appendices
\section{Transition Probabilities in The Klotz Model}
	\label{app_KlotzModel}
	\emph{1st-order stationarity} requires that the probability of being in a given state after $n$ trials is the same for all $n$. In particular, the probability of being in a successful state after two trials is the same as the probability after one trial, i.e. $1-x$. So, upon writing the shorthand $p=P(T_2=0\mid T_1=0)$, we have $1-x = x(1-\lambda) + (1-x)p$. Solving for $p$ gives $P(T_2=0\mid T_1=0) \ =\ p \ =\ 1-\frac{(1-\lambda)x}{1-x}$ and $P(T_2=1\mid T_1=0) \ =\ 1 - p \ =\ \frac{(1-\lambda)x}{1-x}$, for $0\leqslant x<1$.
\section{Proof of the Theorem}
\label{sec_app_C}	
\begin{proof} 
Choose any $F\in\mathcal D$ that satisfies the constraints of the optimisation and denote the Klotz likelihood \eqref{eqn_xKlotzlklhdFn_maintxt} as $L$. The objective function \eqref{eqn_Klotzmodelpostconf} in the theorem,  computed using $F$, is $\frac{\int_{\{x\leqslant b\}\cap\mathcal R} L \diff F}{\int_{\mathcal R} L \diff F} = \left(1 + \frac{\int_{\{x> b\}\cap\mathcal R} L \diff F}{\int_{\{x\leqslant b\}\cap\mathcal R} L \diff F}\right)^{-1}$.
Consequently, we focus on the equivalent optimisation (subject to the same constraints)
\begin{align}
\sup\limits_{\mathcal D} \frac{\int_{\{x > b\}\cap\mathcal R} L \diff F}{\int_{\{x\leqslant b\}\cap\mathcal R} L \diff F}
\label{eqn_equivoptimisation_sup}
\end{align} 

From $F$, one can construct a sequence of priors $\{F^\ast_k\}$ (for $k=1,2,\ldots$) that: {\bf i)} all give larger values than $F$ for the objective function in \eqref{eqn_equivoptimisation_sup}; and {\bf ii)} give objective function values that converge to the objective function value given by some $F^\ast\in\mathcal D$. The construction is as follows. Consider the sequence $\{{\mathcal P}_k\}$ of partitions of the interval $[0,1]$, defined by ${\mathcal P}_k = \{[0,\sfrac{1}{2^k}), [\sfrac{1}{2^k}, \sfrac{2}{2^k}), \ldots, [1-\sfrac{1}{2^k},1]\}$. Each partition induces a partition of $\mathcal R$ into vertical strips, as illustrated in Figure~\ref{fig:fig_proofContinuousMarg_partitions}. Within the $i$th strip, denote the region above the diagonal as $\bf r_{ia}$, the region below the diagonal as $\bf r_{ib}$, and the diagonal segment within the strip as $\bf r_{id}$. Let $i^\ast$ denote the unique index for the strip containing the vertical line $x=b$. Then, for each $F$, partition ${\mathcal P}_k$ allows the objective function in \eqref{eqn_equivoptimisation_sup} to be rewritten,
\begin{align}
\frac{\sum\limits_{i^\ast<i\leqslant 2^k}\int_{\bf r_{ia}\cup r_{ib}\cup r_{id}} L \diff F + \int_{\{x\in (b,\sfrac{i^\ast}{2^k})\}\cap\mathcal R} L \diff F}{\sum\limits_{1\leqslant i<i^\ast}\int_{\bf r_{ia}\cup r_{ib}\cup r_{id}} L \diff F + \int_{\{x\in[\sfrac{(i^\ast-1)}{2^k},b]\}\cap\mathcal R} L \diff F}
\label{eqn_objfunc_partition}    
\end{align}

$L$ is continuous and bounded over $\mathcal R$. So we may bound \eqref{eqn_objfunc_partition} from above by reallocating the probability mass that $F$ assigns within each region/diagonal segment in each strip. All of the mass is reassigned to a point in the relevant region/segment, within $\frac{1}{2^k}$ distance from where $L$ is largest (when $x>b$) or smallest (when $x\leqslant b$). These locations at which $L$ takes its largest and smallest values are \emph{limit points}\footnote{Definition: for the ``open balls'' topology associated with the 2D Euclidean plane, a \emph{limit point} of a subset of the plane is a point that is arbitrarily well-approximated by sequences of points within the subset \cite{rudin1976principles,bryant_1985}.} of the respective regions/diagonal segment within each strip, as illustrated in Figure~\ref{fig:fig_proofContinuousMarg_wcdiscretepriors}. The reallocations define a prior $F^\ast_k$ with a discrete marginal distribution of \emph{pfd}. For each $k$, $F_k^\ast$ satisfies $\frac{\int_{\{x > b\}\cap\mathcal R} L \diff F^\ast_k}{\int_{\{x\leqslant b\}\cap\mathcal R} L \diff F^\ast_k} > \frac{\int_{\{x > b\}\cap\mathcal R} L \diff F}{\int_{\{x\leqslant b\}\cap\mathcal R} L \diff F}$.

By construction, the objective function values from the $F_k^\ast$ converge to the objective function value for some prior $F^\ast$ with continuous marginal density $f(x)$. So, for each $F$,
\begin{align}
 \frac{\int_{\{x > b\}\cap\mathcal R} L \diff F^\ast}{\int_{\{x\leqslant b\}\cap\mathcal R} L \diff F^\ast}\geqslant\inf\limits_k\frac{\int_{\{x > b\}\cap\mathcal R} L \diff F^\ast_k}{\int_{\{x\leqslant b\}\cap\mathcal R} L \diff F^\ast_k} \geqslant \frac{\int_{\{x > b\}\cap\mathcal R} L \diff F}{\int_{\{x\leqslant b\}\cap\mathcal R} L \diff F}
\label{eqn_worseningpriors}    
\end{align}
Since this holds for any feasible prior $F\in\mathcal D$, we have
\begin{align}
\sup\limits_{{\mathcal D}^\ast}\frac{\int\limits_{\{x > b\}\cap\mathcal R}\!\!\!\!\!\! L \diff F^\ast}{\int\limits_{\{x\leqslant b\}\cap\mathcal R}\!\!\!\!\!\! L \diff F^\ast}\geqslant\sup\limits_{\mathcal D}\inf\limits_k\frac{\int\limits_{\{x > b\}\cap\mathcal R} \!\!\!\!\!\!L \diff F^\ast_k}{\int\limits_{\{x\leqslant b\}\cap\mathcal R} \!\!\!\!\!\!L \diff F^\ast_k} \geqslant \sup\limits_{\mathcal D}\frac{\int\limits_{\{x > b\}\cap\mathcal R} \!\!\!\!\!\!L \diff F}{\int\limits_{\{x\leqslant b\}\cap\mathcal R} \!\!\!\!\!\!L \diff F}
\label{eqn_worseningpriors_2}    
\end{align}
where ${\mathcal D}^\ast$ contains all of the $F^\ast$ priors. Because the objective function values for priors in ${\mathcal D}^\ast$ are the limits of objective function values for feasible priors in $\mathcal D$, we also have
\begin{align}
\sup\limits_{{\mathcal D}^\ast}\frac{\int_{\{x > b\}\cap\mathcal R} L \diff F^\ast}{\int_{\{x\leqslant b\}\cap\mathcal R} L \diff F^\ast} \leqslant \sup\limits_{\mathcal D}\frac{\int_{\{x > b\}\cap\mathcal R} L \diff F}{\int_{\{x\leqslant b\}\cap\mathcal R} L \diff F}
\label{eqn_worseningpriors_3}    
\end{align}
Thus \eqref{eqn_worseningpriors_2} and \eqref{eqn_worseningpriors_3} imply three equivalent forms of optimisation,
\begin{align}
\sup\limits_{{\mathcal D}^\ast}\frac{\int\limits_{\{x > b\}\cap\mathcal R} \!\!\!\!\!\!L \diff F^\ast}{\int\limits_{\{x\leqslant b\}\cap\mathcal R} \!\!\!\!\!\!L \diff F^\ast}=\sup\limits_{\mathcal D}\inf\limits_k\frac{\int\limits_{\{x > b\}\cap\mathcal R} \!\!\!\!\!\!L \diff F^\ast_k}{\int\limits_{\{x\leqslant b\}\cap\mathcal R} \!\!\!\!\!\!L \diff F^\ast_k} = \sup\limits_{\mathcal D}\frac{\int\limits_{\{x > b\}\cap\mathcal R} \!\!\!\!\!\!L \diff F}{\int\limits_{\{x\leqslant b\}\cap\mathcal R} \!\!\!\!\!\!L \diff F}
\label{eqn_equivoptimisation_Dstar}
\end{align}
So, we can restrict the optimisation to sequences of priors $\{F^\ast_k\}$.

\begin{figure}[h!]
\captionsetup[figure]{format=hang}	
	\begin{subfigure}[]{0.48\linewidth}
	\centering
	\includegraphics[width=1.0\linewidth]{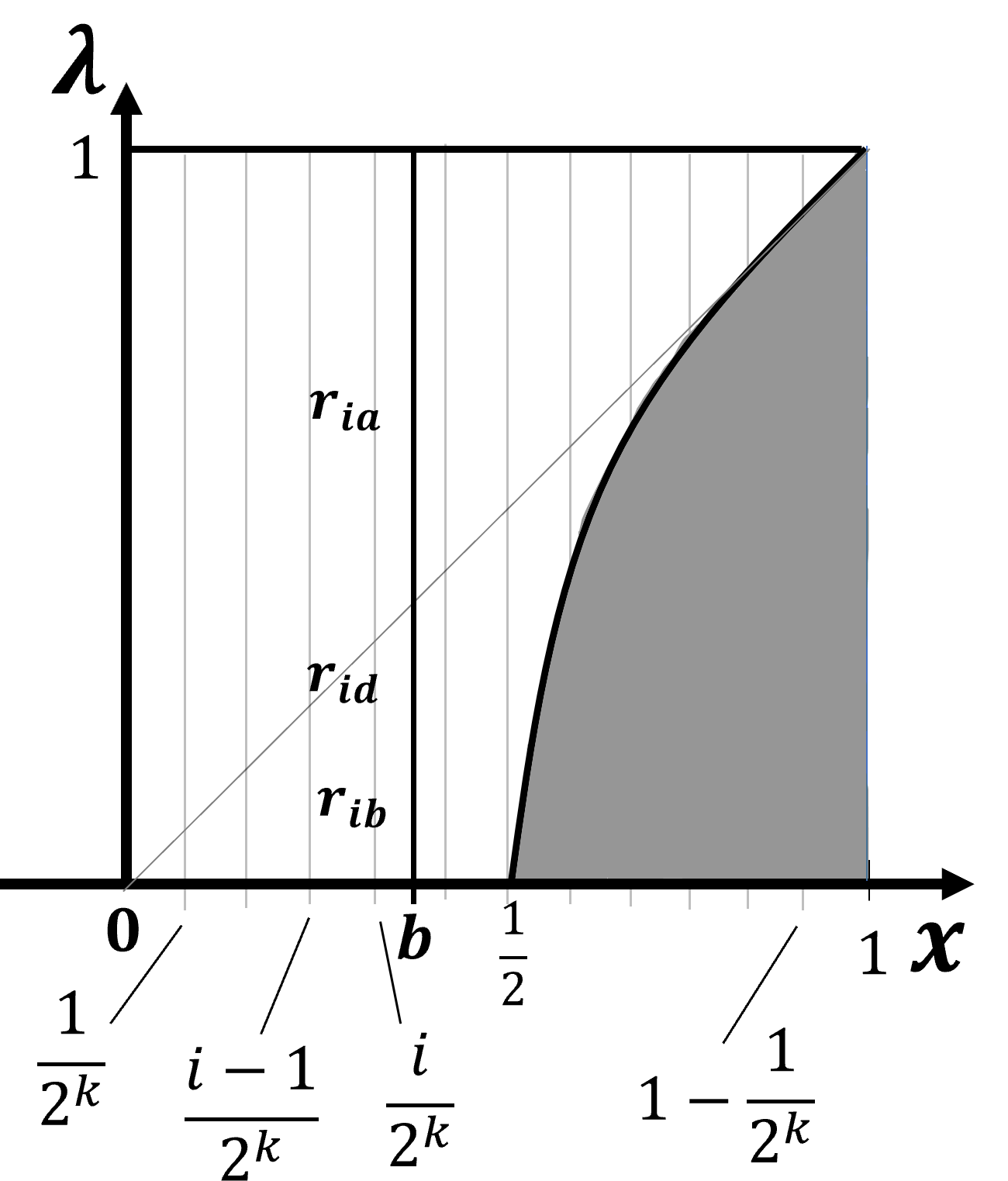}
	\caption{{\footnotesize \normalsize}}
	\label{fig:fig_proofContinuousMarg_partitions}
	\end{subfigure}
	\begin{subfigure}[h!]{0.48\linewidth}
	\centering
	\includegraphics[width=1.0\linewidth]{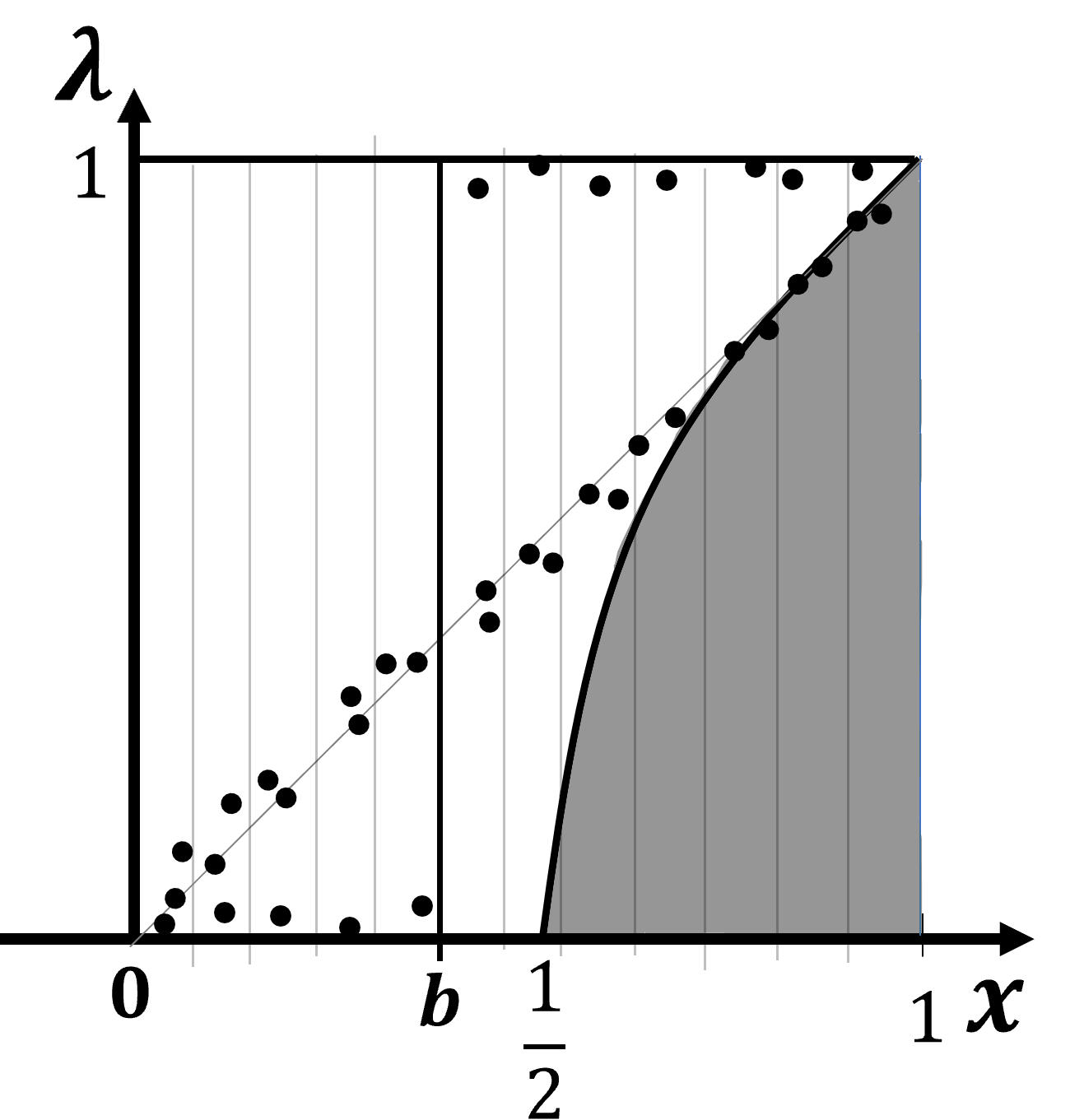}
	\vspace{0.15cm}
	\caption{{\footnotesize \normalsize}}	
	\label{fig:fig_proofContinuousMarg_wcdiscretepriors}
	\end{subfigure}
\caption[partition of {\mathcal R}]{{\footnotesize {\bf (a)} ${\mathcal P}_k$ partitions $\mathcal R$ into vertical strips; {\bf (b)} example support of $F^\ast_k$  \normalsize}}  
\label{fig:fig_proofContinuousMarg}
\end{figure}

For all sufficiently large $k$, the width of the strips can be made as small as we please. Consequently, by considering sufficiently large $k$, we may treat the location of masses within each strip as lying on the same vertical line, with masses on the diagonal segment or on the $\lambda=0,1$ borders of $\mathcal R$. Consider then an arbitrary prior $F^\ast_k$ (with discrete marginal) for sufficiently large $k$. The probability masses in a pair of strips can be reallocated within each strip to construct a new prior that gives a larger objective function value. One does this as follows. 

Let the functions $g_l(x)$ and $g_u(x)$ be as defined in \eqref{eqn_thegfunctions}. Denote the unique $x$ values at which $g_l(x)$ and $g_u(x)$ attain their maxima as $x_l$ and $x^u$, respectively. There are 4 possibilities for reallocating probability masses, based on the relative sizes of $x_l$, $x^u$ and $b$.


\begin{figure}[htbp!]
\captionsetup[figure]{format=hang}	
	\begin{subfigure}[]{0.48\linewidth}
	\centering
	\includegraphics[width=1.0\linewidth]{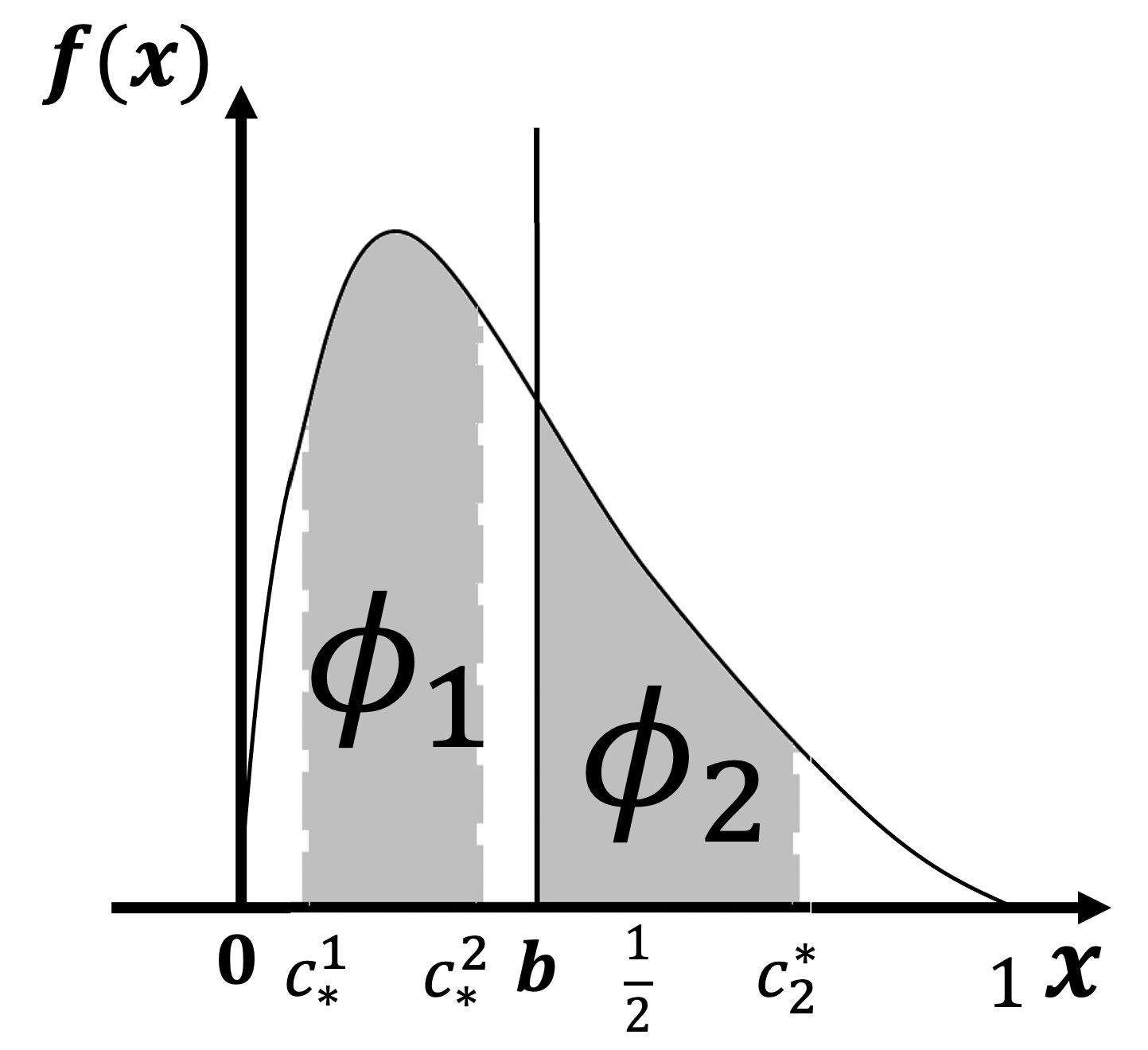}
	\caption{{\footnotesize \normalsize}}
	\label{fig:fig_density_f}
	\end{subfigure}
	\begin{subfigure}[h!]{0.48\linewidth}
	\centering
	\includegraphics[width=1.0\linewidth]{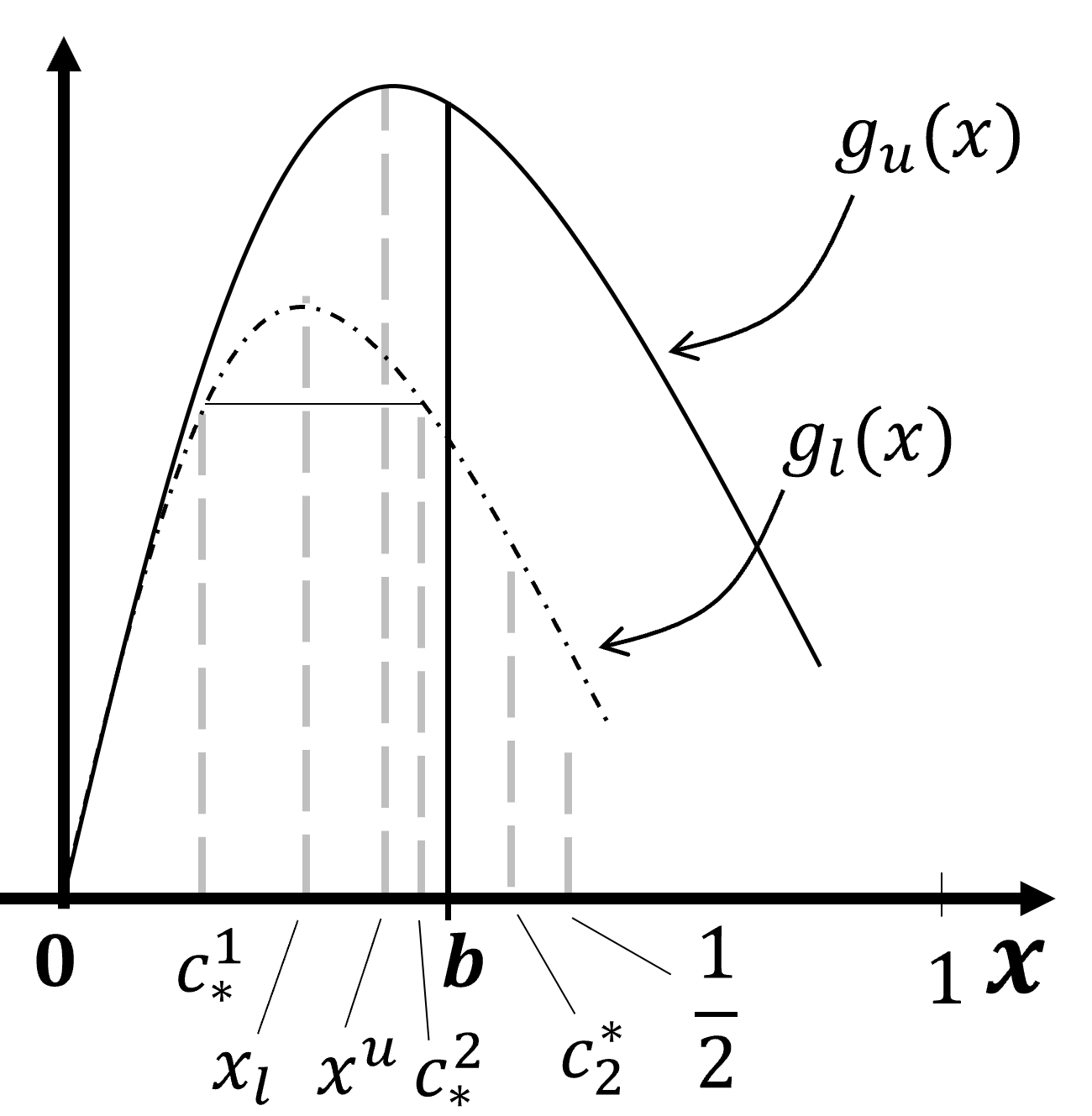}
	\caption{{\footnotesize \normalsize}}	
	\label{fig:fig_gfunctions}
	\end{subfigure}
\caption[Continuous Marginal Constraints]{{\footnotesize {\bf (a)} An example density $f(x)$; {\bf (b)} example maxima for $g_l$, $g_u$  \normalsize}}  
\label{fig:fig_continuous_marginal_constraints}
\end{figure}


\subsubsection*{ {\bf\emph{Case 1)}} $x_l<b$ and $x^u<b$} 

Let $F^\ast_k$ be as depicted in Figure~\ref{fig:fig_proofContinuousMarg_wcdiscretepriors}. Consider two vertical strips, as shown in Figure~\ref{fig:fig_movemass_c1_star}. The strips lie to the left of the vertical line $x=x_l$. For $\Delta=0$, let the probabilities $M_1-\Delta$, $M_2+\Delta$, $M_3+\Delta$ and $M_4-\Delta$ be initially assigned to the 4 depicted locations (2 in each strip). These ``$M$''s are constant and consistent with the PKs, and $\Delta$ is a sufficiently small probability mass. The derivative of the objective function with respect to $\Delta$ exists, because the objective function is a rational function of $\Delta$. The sign of this derivative is determined by the function $g_l(x)$ in Figure~\ref{fig:fig_gfunctions}. That is, the expression for the derivative is negative \emph{iff} $g_l(x_2)-g_l(x_1)>0$ (where, for $x_1<x_2<x_l<b$, $x_1$ is in the leftmost strip and $x_2$ is in the other strip). But this is true because $g_l(x)$ is unimodal.

\begin{figure}[h!]
\captionsetup[figure]{format=hang}	
	\begin{subfigure}[]{0.48\linewidth}
	\centering
	\includegraphics[width=1.0\linewidth]{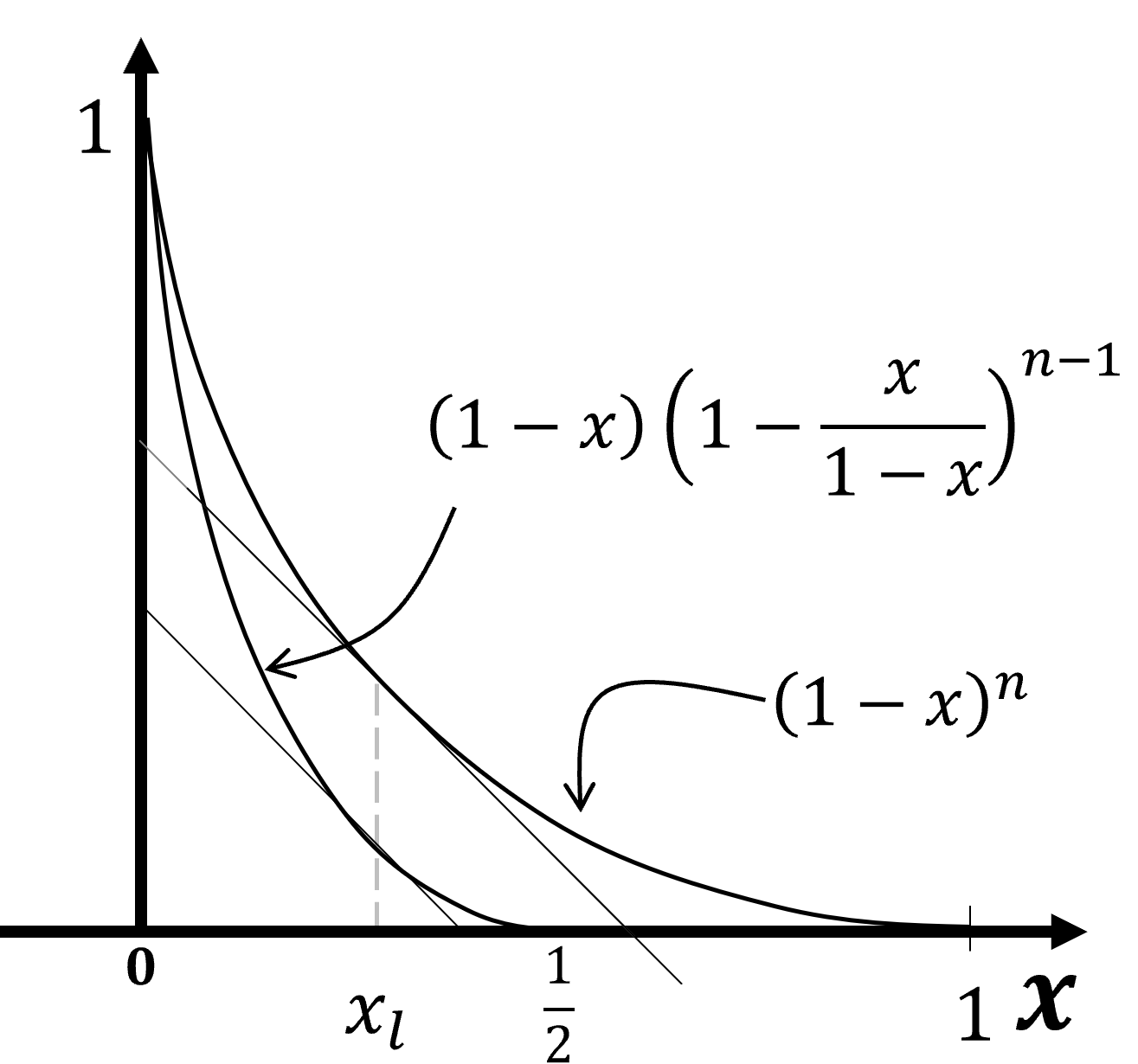}
	\caption{{\footnotesize \normalsize}}
	\label{fig:fig_glowerUnimodality}
	\end{subfigure}
	\begin{subfigure}[h!]{0.48\linewidth}
	\centering
	\includegraphics[width=1.0\linewidth]{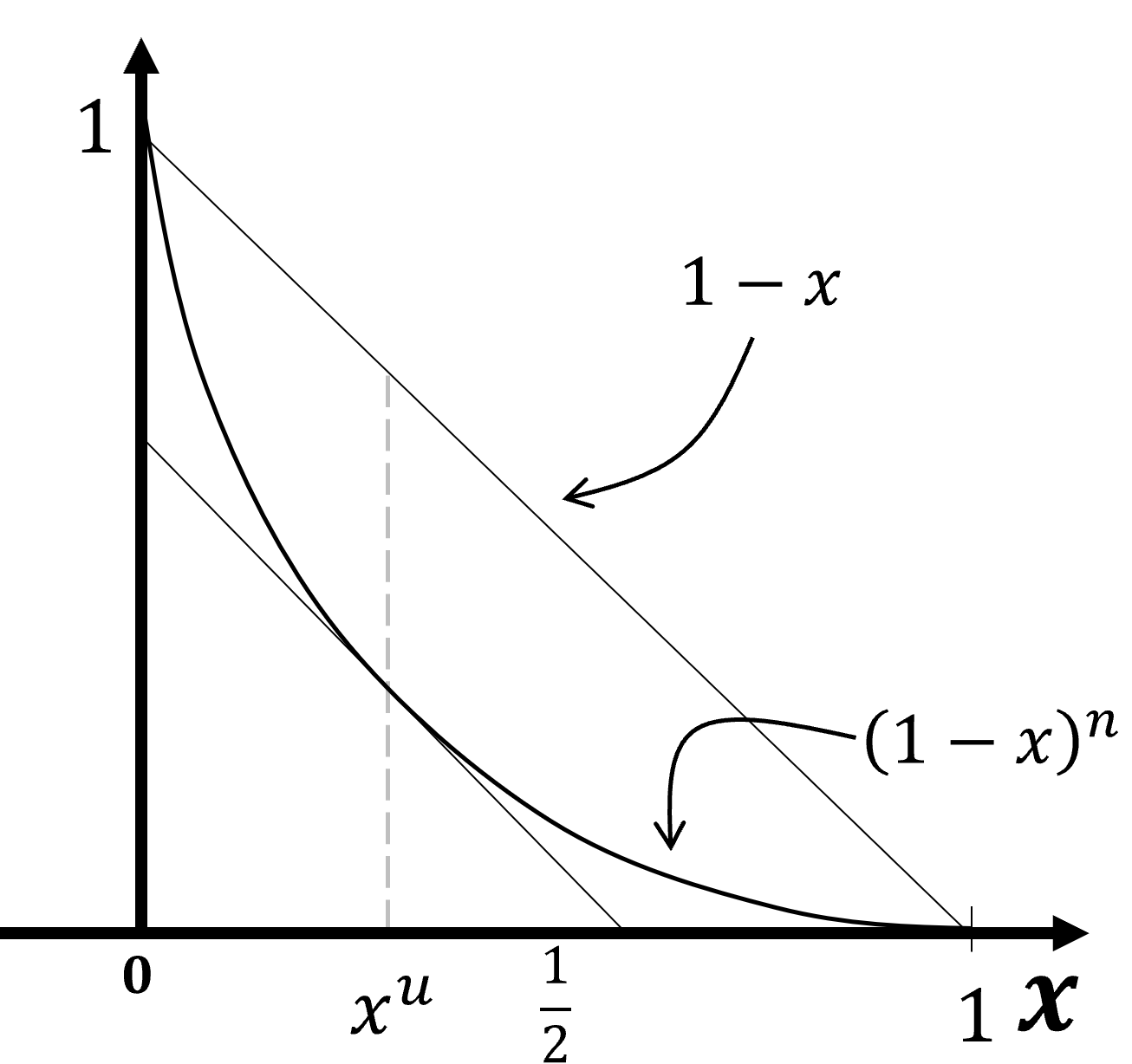}
	\caption{{\footnotesize \normalsize}}	
	\label{fig:fig_gupperUnimodality}
	\end{subfigure}
\caption[Unimodality of g functions]{{\footnotesize Plots of the pair of functions that define {\bf (a)} $g_l$,\, {\bf (b)} $g_u$  \normalsize}}  
\label{fig:fig_unimodalityofgfunctions}
\end{figure}

The unimodality of both $g_l(x)$ and $g_u(x)$ can be seen from arguments illustrated by Figures~\ref{fig:fig_glowerUnimodality} and \ref{fig:fig_gupperUnimodality}. These figures depict the pair of functions that define each ``$g$'' function. Each pair consists of two convex, monotonically decreasing functions that are equal at $x=0$. In Figure~\ref{fig:fig_glowerUnimodality}, over $0\leqslant x\leqslant\frac{1}{2}$, the pair of functions are initially relatively convex (with relative derivative 1 at $x=0$) then relatively concave (with relative derivative 0 at $x=\frac{1}{2}$). While in Figure~\ref{fig:fig_gupperUnimodality}, the pair of functions are relatively concave over $0\leqslant x\leqslant\frac{1}{2}$ (with relative  derivatives $n$ and $n(\frac{1}{2})^{n-1}$, at $x=0$ and $x=\frac{1}{2}$ respectively). Because of these, in each figure, the functions have the same tangent slope at a nontrivial $x$ value in their shared domain. This is the $x$ value at which the respective ``$g$'' function attains its maximum -- the values $x_l$ and $x^u$. These values lie in the interval $0\leqslant x< \sfrac{1}{2}$.




Since the objective function's derivative with respect to $\Delta$ is negative, $\Delta$ should be made as small as possible, which makes the objective function as large as possible. Roughly speaking, mass in the ``$x_1$'' strip should be placed on the diagonal, while mass in the ``$x_2$'' strip should be placed along the $\lambda=0$ line. Similar arguments justify the mass re-allocations illustrated in Figures~\ref{fig:fig_movemass_c2_star} and \ref{fig:fig_movemass_cstar}, using $g_l(x)$ and $g_u(x)$ respectively.

The general rule is, for a pair of strips containing $x$ values less than $b$, the strip that is closest to containing $x_l$ should have as much mass as possible below the diagonal, while the other strip should have as much mass as possible on or above the diagonal. Similarly, for two strips with $x$ values greater than $b$, the strip closest to containing $x^u$ should have as much mass as possible above the diagonal, while the other strip should have as much mass as possible on or below the diagonal. 

So, by construction, a discrete prior $F^{\ast}_k$ (e.g. Figure~\ref{fig:fig_proofContinuousMarg_wcdiscretepriors}) is replaced by a more extreme $F^{\ast\ast}_k$ (e.g. Figure~\ref{fig:fig_continousmarginal_discreteapprx_phi1phi2small}). Further reallocation is impossible when $c^1_\ast$, $c^2_\ast$, $c_1^\ast$ and $c_2^\ast$ have been found that solve
\[\begin{array}{ccc}
&\argmin\limits_{0\leqslant r<s\leqslant b}\mid g_l(r)-g_l(s)\mid,\,\argmin\limits_{b\leqslant v<w\leqslant 1}\mid g_u(v)-g_u(w)\mid&\\
\mbox{s.t.} \,\,\,\,\,
&g_l(0)\leqslant g_l(r),\,\,\,\,g_l(b)\leqslant g_l(s)\,,&\\
&g_u(b) \leqslant g_u(v),\,\,\,\,g_u(1)\leqslant g_u(w)\,,&\\ 
&\int_{\{x\in[r,s]\}\cap\mathcal R}\diff F^{\ast\ast}_k = \int_{r}^{s}f(x)\diff x = \phi_1\,,&\\
&\int_{\{x\in[b,w]\}\cap\mathcal R}\diff F^{\ast\ast}_k =\int_{b}^{w}f(x)\diff x = \phi_2\,,&\\
&0<r<x_l<s\leqslant b\,,\,\,\,\,0<x^u<b \leqslant v<w\leqslant 1&
\end{array}\]

In particular, since $x^u<b$ implies $c^\ast_1=b$, we can restrict the optimisation to these more extreme priors $F^{\ast\ast}_k$. For such priors, the objective function \eqref{eqn_objfunc_partition} is comprised of sums that are integrals (with respect to $F^{\ast\ast}_k$) of simple functions. That is:
\begin{align*}
&\frac{\sum\limits_{i^\ast<i\leqslant 2^k}\int_{\bf r_{ia}\cup r_{ib}\cup r_{id}} L \diff F^{\ast\ast}_k + \int_{\{x\in (b,\sfrac{2^{i^\ast}}{2^k})\}\cap\mathcal R} L \diff F^{\ast\ast}_k}{\sum\limits_{1\leqslant i<i^\ast}\int_{\bf r_{ia}\cup r_{ib}\cup r_{id}} L \diff F^{\ast\ast}_k + \int_{\{x\in[\sfrac{2^{i^\ast-1}}{2^k},b]\}\cap\mathcal R} L \diff F^{\ast\ast}_k} \\
{}={}&\frac{\int_{\{x\in(b,c_2^\ast)\}\cap\mathcal R} L \diff F^{\ast\ast}_k + \int_{\{x\in (c_2^\ast,1)\}\cap\mathcal R} L \diff F^{\ast\ast}_k}{\int_{\{x\in(c^1_\ast,c^2_\ast)\}\cap\mathcal R} L \diff F^{\ast\ast}_k + \int_{\{x\in(0,c^1_\ast)\cup(c^2_\ast,b)\}\cap\mathcal R} L \diff F^{\ast\ast}_k}\\
{}={}&\frac{\sum\limits_{i^{\ast}<i\leqslant 2^k}\!\!\!\left(L_1{\bf 1}_{x_i\in(b,c^{\ast}_2)} + L_3{\bf 1}_{x_i\in(c^{\ast}_2,1)}\right)\!\!\!\int\limits_{\mbox{\scriptsize ith strip}}\!\!\!\!\!\diff F^{\ast\ast}_k}{\sum\limits_{1\leqslant i \leqslant i^{\ast}}\!\!\!\left(L_2{\bf 1}_{x_i\in(c_{\ast}^1,c_{\ast}^2)} + L_3{\bf 1}_{x_i\in(0,c_{\ast}^1)\cup(c_{\ast}^2,b)}\right)\!\!\!\int\limits_{\mbox{\scriptsize ith strip}}\!\!\!\!\!\diff F^{\ast\ast}_k} \\
{}={}&\frac{\sum\limits_{i^{\ast}<i\leqslant 2^k}\!\!\!\left(L_1{\bf 1}_{x_i\in(b,c^{\ast}_2)} + L_3{\bf 1}_{x_i\in(c^{\ast}_2,1)}\right)\!\!\int\limits_{\frac{i-1}{2^k}}^{\frac{i}{2^k}}\!\!f(x)\diff x}{\sum\limits_{1\leqslant i \leqslant i^{\ast}}\!\!\!\left(L_2{\bf 1}_{x_i\in(c_{\ast}^1,c_{\ast}^2)} + L_3{\bf 1}_{x_i\in(0,c_{\ast}^1)\cup(c_{\ast}^2,b) }\right)\!\!\int\limits_{\frac{i-1}{2^k}}^{\frac{i}{2^k}}\!\!f(x)\diff x}
\end{align*}
where $x_i\in[\frac{i-1}{2^k},\frac{i}{2^k})$, $L_1:=(1-x_i)$, $L_2:=\frac{(1-2x_i)^{n-1}}{(1-x_i)^{n-2}}$ and $L_3:=(1-x_i)^n$

\begin{figure}[htbp!]
\captionsetup[figure]{format=hang}	
	\begin{subfigure}[]{0.51\linewidth}
	\centering
	\includegraphics[width=1.0\linewidth]{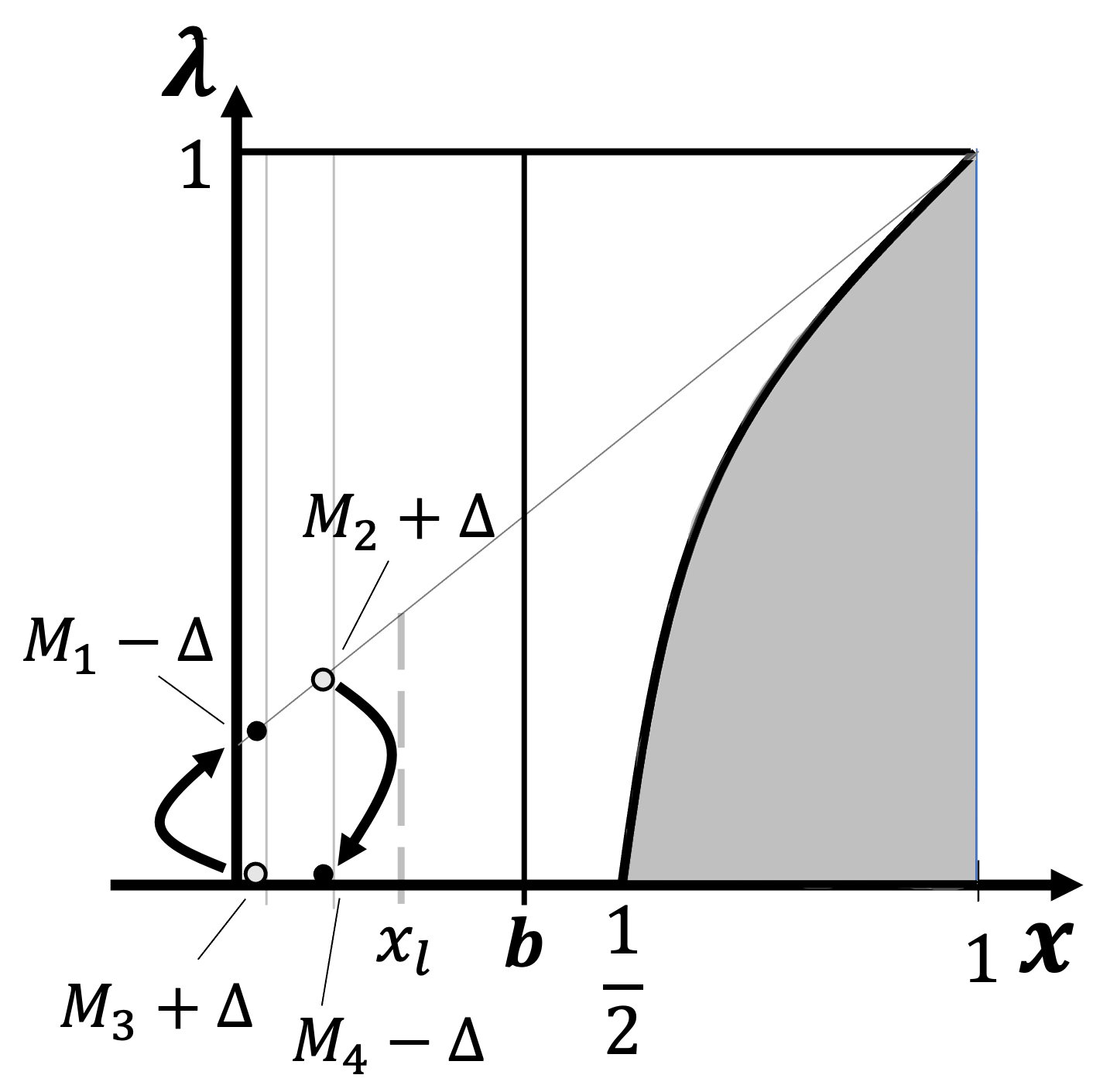}
	\caption{{\footnotesize \normalsize}}	
	\label{fig:fig_movemass_c1_star}
	\end{subfigure}	
	\begin{subfigure}[]{0.47\linewidth}
	\centering
	\includegraphics[width=1.0\linewidth]{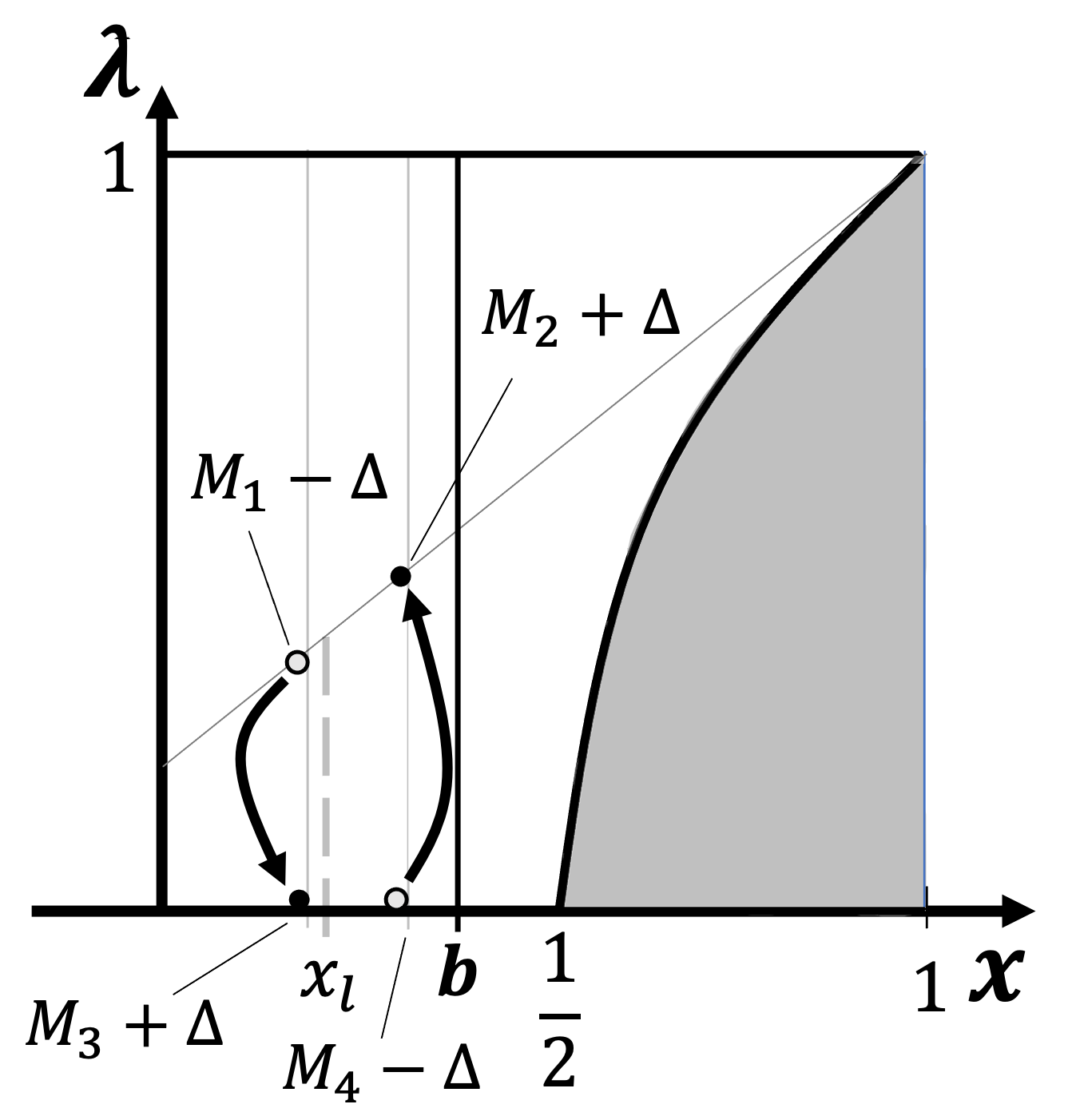}
	\caption{{\footnotesize \normalsize}}	
	\label{fig:fig_movemass_c2_star}
	\end{subfigure}	
	\begin{subfigure}[]{0.48\linewidth}
	\centering
	\includegraphics[width=1.0\linewidth]{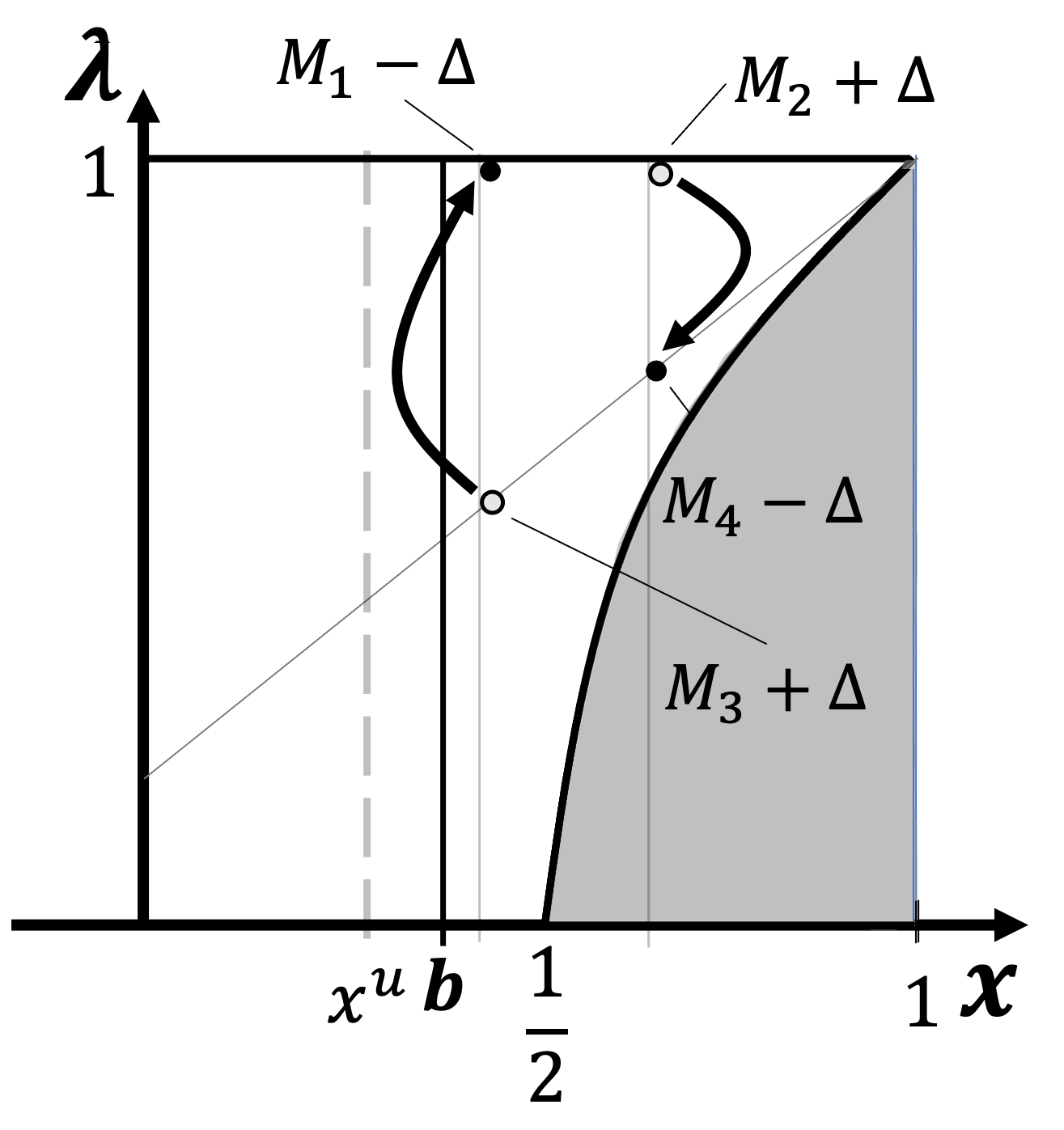}
	\caption{{\footnotesize \normalsize}}	
	\label{fig:fig_movemass_cstar}
	\end{subfigure}	
	\begin{subfigure}[]{0.48\linewidth}
	\centering
	\includegraphics[width=1.0\linewidth]{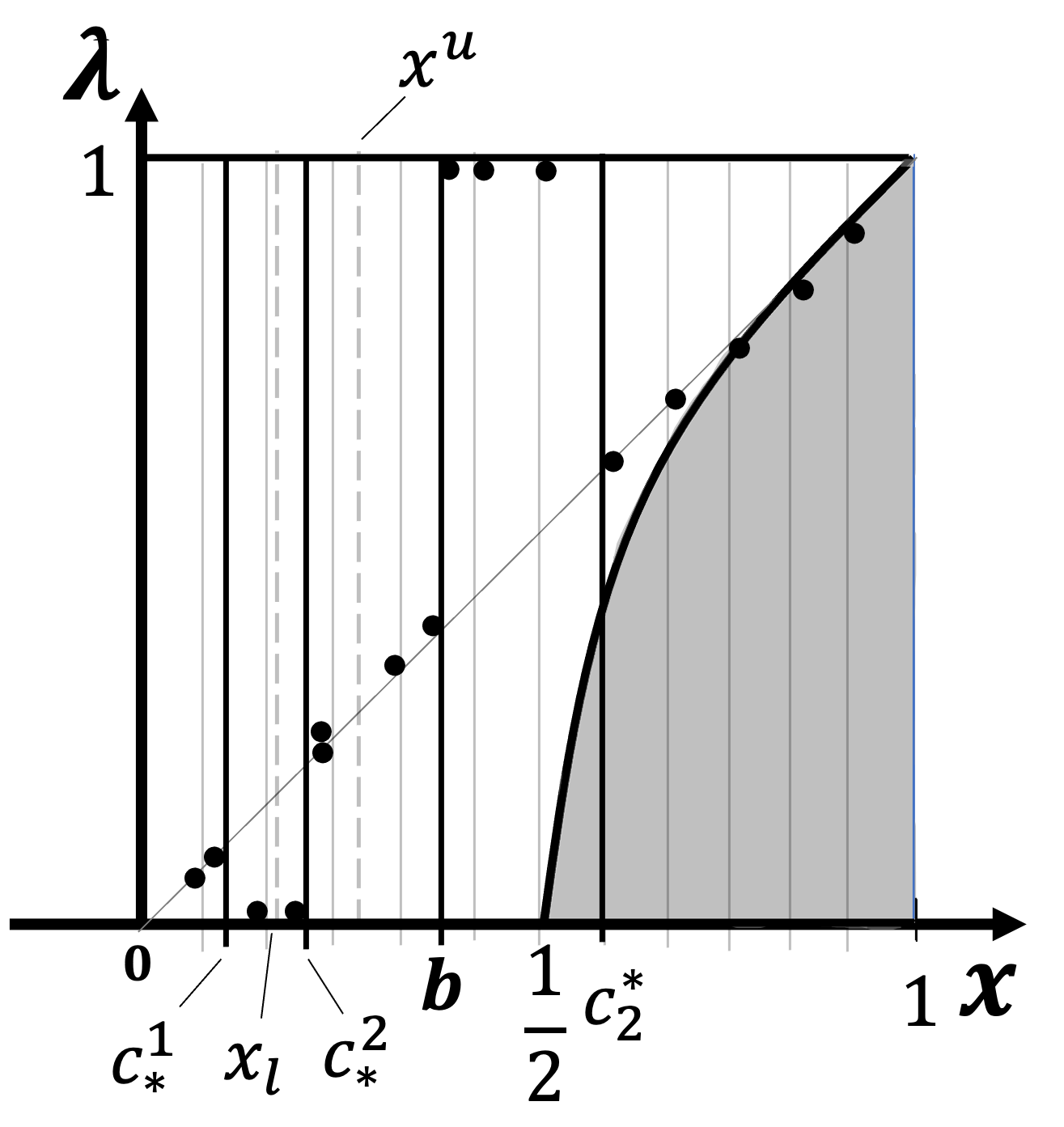}
	\caption{{\footnotesize \normalsize}}
	\label{fig:fig_continousmarginal_discreteapprx_phi1phi2small}
	\end{subfigure}
\caption[Mass movement in Discrete Priors]{{\footnotesize $F^{\ast\ast}_k$ is constructed from $F^\ast_k$ by reallocating probability mass   \normalsize}}  
\label{fig_}
\end{figure}

By the \emph{dominated convergence theorem} (see \cite{schilling_2005}), the continuity of $L$ over $\mathcal R$ implies that the sums converge to integrals with respect to the density $f(x)$ as $k\rightarrow\infty$, so 
\begin{align*}
\frac{\int_{b}^{1}\!\left((1-x){\bf 1}_{x\in(b,c^{\ast}_2)} + (1-x)^n{\bf 1}_{x\in(c^{\ast}_2,1)}\right)f(x)\diff x}{\int_{0}^{b}\!\left(\frac{(1-2x)^{n-1}}{(1-x)^{n-2}}{\bf 1}_{x\in(c_{\ast}^1,c_{\ast}^2)} + (1-x)^n{\bf 1}_{x\in(0,c_{\ast}^1)\cup (c_{\ast}^2,b)}\right)f(x)\diff x}\,\, .
\end{align*}

\subsubsection*{{\bf\emph{Case 2)}} $x_l<b$ and $x^u>b$}

An analogous argument to case 1 gives the solution

\begin{align*}
\frac{\int_{b}^{1}\!\left((1-x){\bf 1}_{x\in(c^\ast_1,c^{\ast}_2)} + (1-x)^n{\bf 1}_{x\in(b,c^{\ast}_1)\cup(c^{\ast}_2,1)}\right)f(x)\diff x}{\int_{0}^{b}\!\left(\frac{(1-2x)^{n-1}}{(1-x)^{n-2}}{\bf 1}_{x\in(c_{\ast}^1,c_{\ast}^2)} + (1-x)^n{\bf 1}_{x\in(0,c_{\ast}^1)\cup(c_{\ast}^2,b)}\right)f(x)\diff x}
\end{align*}
where $c^1_\ast$, $c^2_\ast$, $c_1^\ast$ and $c_2^\ast$ have been identified that solve
\[\begin{array}{ccc}
&\argmin\limits_{0\leqslant r<s\leqslant b}\mid g_l(r)-g_l(s)\mid,\,\argmin\limits_{b\leqslant v<w\leqslant 1}\mid g_u(v)-g_u(w)\mid&\\
\mbox{s.t.} \,\,\,\,\,\,\,&g_l(0)\leqslant g_l(r),\,\,\,\,g_l(b)\leqslant g_l(s)\,,&\\
&g_u(b) \leqslant g_u(v),\,\,\,\,g_u(1)\leqslant g_u(w)\,,&\\ 
&\int_{\{x\in[r,s]\}\cap\mathcal R}\diff F^{\ast\ast}_k = \int_{[r,s]}f(x)\diff x = \phi_1\,,&\\
&\int_{\{x\in[v,w]\}\cap\mathcal R}\diff F^{\ast\ast}_k =\int_{[v,w]}f(x)\diff x = \phi_2\,,&\\
&0<r<x_l<s\leqslant b\,,\,\,\,\,0<b \leqslant v<x^u<w\leqslant 1
\end{array}\]

\subsubsection*{{\bf\emph{Case 3)}} $x_l>b$ and $x^u<b$}  

An analogous argument to case 1 gives the solution

\begin{align*}
\frac{\int_{b}^{1}\!\left((1-x){\bf 1}_{x\in(b,c^{\ast}_2)} + (1-x)^n{\bf 1}_{x\in(c^{\ast}_2,1)}\right)f(x)\diff x}{\int_{0}^{b}\!\left(\frac{(1-2x)^{n-1}}{(1-x)^{n-2}}{\bf 1}_{x\in(c_{\ast}^1,b)} + (1-x)^n{\bf 1}_{x\in(0,c_{\ast}^1)}\right)f(x)\diff x}
\end{align*}
where $c^1_\ast$, $c^2_\ast$, $c_1^\ast$ and $c_2^\ast$ have been identified that solve
\[\begin{array}{ccc}
&\argmin\limits_{0\leqslant r<s\leqslant b}\mid g_l(r)-g_l(s)\mid,\,\argmin\limits_{b\leqslant v<w\leqslant 1}\mid g_u(v)-g_u(w)\mid&\\
\mbox{s.t.} \,\,\,&g_l(0)\leqslant g_l(r),\,\,\,\,g_l(b)\leqslant g_l(s)\,,&\\
&g_u(b) \leqslant g_u(v),\,\,\,\,g_u(1)\leqslant g_u(w)\,,&\\ 
&\int_{\{x\in[r,b]\}\cap\mathcal R}\diff F^{\ast\ast}_k = \int_{r}^{b}f(x)\diff x = \phi_1\,,&\\
&\int_{\{x\in[b,w]\}\cap\mathcal R}\diff F^{\ast\ast}_k =\int_{b}^{w}f(x)\diff x = \phi_2\,,&\\
&0<r<s\leqslant b<x_l\,,\,\,\,\,0<x^u<b \leqslant v<w\leqslant 1&
\end{array}\]
In particular, because $x_l>b$, $x^u<b$, we have $c^2_\ast=b$, $c^\ast_1=b$.

\subsubsection*{{\bf\emph{Case 4)}} $x_l>b$ and $x^u>b$}  

An analogous argument to case 1 gives the solution

\begin{align*}
\frac{\int_{b}^{1}\!\left((1-x){\bf 1}_{x\in(c^\ast_1,c^{\ast}_2)} + (1-x)^n{\bf 1}_{x\in(b,c^{\ast}_1)\cup(c^{\ast}_2,1) }\right)f(x)\diff x}{\int_{0}^{b}\!\left(\frac{(1-2x_i)^{n-1}}{(1-x_i)^{n-2}}{\bf 1}_{x\in(c_{\ast}^1,b)} + (1-x)^n{\bf 1}_{x\in(0,c_{\ast}^1)}\right)f(x)\diff x}
\end{align*}
where $c^1_\ast$, $c^2_\ast$, $c_1^\ast$ and $c_2^\ast$ have been identified that solve
\[\begin{array}{ccc}
&\argmin\limits_{0\leqslant r<s\leqslant b}\mid g_l(r)-g_l(s)\mid,\,\argmin\limits_{b\leqslant v<w\leqslant 1}\mid g_u(v)-g_u(w)\mid&\\
\mbox{s.t.} \,\,\,&g_l(0)\leqslant g_l(r),\,\,\,\,g_l(b)\leqslant g_l(s)\,,&\\
&g_u(b) \leqslant g_u(v),\,\,\,\,g_u(1)\leqslant g_u(w)\,,&\\ 
&\int_{\{x\in[r,b]\}\cap\mathcal R}\diff F^{\ast\ast}_k = \int_{r}^{b}f(x)\diff x = \phi_1\,,&\\
&\int_{\{x\in[v,w]\}\cap\mathcal R}\diff F^{\ast\ast}_k =\int_{v}^{w}f(x)\diff x = \phi_2\,,& \\
&0<r<s\leqslant b<x_l\,,\,\,\,\,0<b \leqslant v<x^u<w\leqslant 1&
\end{array}\]
In particular, because $x_l>b$, we must have $c^2_\ast=b$.

\end{proof}

\section{Asymptotics of Posterior Confidence based on Failure-free Operation}
\label{sec_app_E}	
\begin{claim*}
In the theorem, $\lim\limits_{n\rightarrow\infty}Q=\left\{\begin{array}{ll} 0,&\mbox{if }\phi_2=0 \\ \infty,&\mbox{if }\phi_2>0 \end{array}\right.$. Since the assessor's conservative posterior confidence in the bound $b$ is $\frac{1}{1+Q}$, the assessor either becomes certain that $b$ has been satisfied, or they become certain that it has not.   
\end{claim*}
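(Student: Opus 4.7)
The plan is to analyze the asymptotic behavior of the numerator $N$ and denominator $D$ in $Q=N/D$ separately, handling the regimes $\phi_2>0$ and $\phi_2=0$ in turn. My first step is to pin down the limiting locations of the stationary points $x_l,x^u$ of the unimodal functions $g_l,g_u$. A direct derivative calculation gives $x^u=1-(1/n)^{1/(n-1)}\to 0$, and a Taylor expansion of $g_l$ near the origin (showing $g_l(x)\sim(n-1)x^2$ to leading order) likewise forces $x_l\to 0$. Hence for all sufficiently large $n$ we have $x_l,x^u<b$; this is Case~1 of the Theorem's proof, which forces $c_1^\ast=b$ together with $c_\ast^1\to 0$. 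The remaining endpoints are then pinned down by the integral constraints: $c_\ast^2\to c_\ast^\infty$ with $\int_0^{c_\ast^\infty}\!f\diff x=\phi_1$, and $c_2^\ast\to c_2^\infty$ with $\int_b^{c_2^\infty}\!f\diff x=\phi_2$.

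For the denominator: every integrand in $D$ is bounded by $1$ and tends to $0$ pointwise for $x>0$, since both $(1-x)^n$ and $h_n(x):=(1-2x)^{n-1}/(1-x)^{n-2}=(1-x)^2\bigl((1-2x)/(1-x)\bigr)^{n-1}$ decay exponentially on $(0,\tfrac12)$. The dominated convergence theorem then gives $D\to 0$. If $\phi_2>0$, the numerator retains a non-vanishing piece $\int_{c_1^\ast}^{c_2^\ast}(1-x)f(x)\diff x$, which by continuity of the integral converges to $\int_b^{c_2^\infty}(1-x)f(x)\diff x\ge(1-c_2^\infty)\phi_2>0$, while the complementary $(1-x)^n$ terms vanish by DCT. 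Hence $N$ has a strictly positive limit and $Q=N/D\to\infty$, settling the $\phi_2>0$ case.

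For $\phi_2=0$ the numerator collapses to $N=\int_b^1(1-x)^n f(x)\diff x$ and both $N,D\to 0$, so a rate comparison is needed. Splitting the range yields $N\le(1-b)^n P(b<X\le b+\varepsilon)+(1-b-\varepsilon)^n$ for any $\varepsilon>0$, so $N=o((1-b)^n)$ after letting $\varepsilon\to 0$ and using continuity of the CDF at $b$. For the lower bound on $D$, in the generic case $\phi_1<P(X\le b)$ one has $\int_{c_\ast^2}^b(1-x)^n f(x)\diff x\ge(1-b)^n\!\int_{c_\ast^2}^b f(x)\diff x$, which is eventually at least $(1-b)^n\cdot\bigl(P(X\le b)-\phi_1\bigr)/2$. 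Combining, $Q\to 0$. The boundary subcase $\phi_1=P(X\le b)$ (in which $c_\ast^2\to b$) requires instead leaning on the $h_n$ integrand, using strict monotonicity of $h_n$ on $[0,\tfrac12]$ (from a short derivative sign check) together with a Laplace-style asymptotic for $N$ near $b$.

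The main obstacle is the self-contained justification of the asymptotic locations of $c_\ast^1,c_\ast^2,c_1^\ast,c_2^\ast$, since these are defined implicitly via the constrained argmins in the Theorem. One has to verify that, for large $n$, the argmin solutions indeed hug the stationary points of $g_l,g_u$ and are then squeezed by the integral constraints. The rate comparison in the $\phi_2=0$ case is a secondary hurdle, but the key ingredient -- that $f$ puts positive mass on some interval bounded away from $b$ from below -- is immediate from $P(X<b)>0$ and continuity of $f$. The final statement about posterior confidence follows at once, since posterior confidence equals $1/(1+Q)$, which tends to $1$ when $Q\to 0$ and to $0$ when $Q\to\infty$.
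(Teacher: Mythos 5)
Your proof is correct in substance and follows the same overall strategy as the paper's: split on $\phi_2$; for $\phi_2>0$ observe that the numerator retains the non-vanishing term $\int_{c_1^\ast}^{c_2^\ast}(1-x)f(x)\diff x$ while every other integrand decays to zero; for $\phi_2=0$ compare decay rates of numerator and denominator. Where you genuinely differ is the rate comparison in the $\phi_2=0$ case: the paper bounds the convex integrand $(1-x)^n$ above by a chord and below by a tangent (Jensen's inequality), yielding $Q\leqslant C\left(\tfrac{1-b}{1-\MyExp_\ast^2X}\right)^{\!n}\to 0$ with $\MyExp_\ast^2X<b$, whereas you show directly that $\int_b^1(1-x)^nf(x)\diff x=o((1-b)^n)$ by splitting at $b+\varepsilon$, and bound the denominator below by $(1-b)^n\int_{c_\ast^2}^bf(x)\diff x$. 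Both work; yours is more elementary, the paper's gives an explicit exponential rate. You are also more explicit than the paper about why $c_\ast^1\to0$ and $c_1^\ast= b$ eventually (via $x_l,x^u\to0$), which the paper supports only by the numerical plots in Figure~\ref{fig_glocalmaximaxvalues}; your closed form for $x^u$ is right, though asserting that $g_l(x)\sim(n-1)x^2$ near the origin ``forces'' $x_l\to0$ needs one more line, since local behaviour at $0$ does not by itself locate the maximum. Two caveats, both shared with the paper: the lower bound $(1-c_2^\infty)\phi_2$ is vacuous when $\phi_2=\int_b^1f(x)\diff x$ (positivity of $\int_{c_1^\ast}^{c_2^\ast}(1-x)f(x)\diff x$ still holds, so nothing breaks), and the boundary subcase $\phi_1=P(X\leqslant b)$, which PK\ref{cons_reliable_system} permits, is only sketched by you and silently assumed away by the paper, whose proof takes $\int_{c_\ast^2}^bf(x)\diff x$ and $\MyExp_\ast^2X$ to have nonzero limits below $b$.
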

\begin{proof}
We will show that 
\begin{align}
&\frac{\int_{b}^{1}\!\left((1-x)^n{\bf 1}_{x\in(b,c_1^{\ast})\cup(c_2^{\ast},1)}\ +\ (1-x) {\bf 1}_{x\in(c^{\ast}_1,c^{\ast}_2)}\right)f(x)\diff x}{\int_{0}^{b}\!\left((1-x)^n {\bf 1}_{x\in(0,c^1_{\ast})\cup(c^2_{\ast},b)}\ +\ \frac{(1-2x)^{n-1}}{(1-x)^{n-2}}{\bf 1}_{x\in(c^1_{\ast},c^2_{\ast})}\right)f(x)\diff x} \nonumber \\
&\hspace{3.5cm}\xrightarrow{n\rightarrow\infty}\left\{\begin{array}{ll} 0,&\mbox{if }\phi_2=0 \\ \infty,&\mbox{if }\phi_2>0 \end{array}\right.
\label{eqn_app_CBIsoln_continousmarginal_limits_original}
\end{align}
Since $\phi_2=0$ implies $c_1^{\ast}=c_2^{\ast}$, the \emph{lhs} of \eqref{eqn_app_CBIsoln_continousmarginal_limits_original}, i.e. $Q$, becomes
\begin{align}
\frac{\int_{b}^{1}(1-x)^n f(x)\diff x}{\int_{0}^{b}\!\left((1-x)^n {\bf 1}_{x\in(0,c^1_{\ast})\cup(c^2_{\ast},b)}\ +\ \frac{(1-2x)^{n-1}}{(1-x)^{n-2}}{\bf 1}_{x\in(c^1_{\ast},c^2_{\ast})}\right)f(x)\diff x}
\label{eqn_app_CBIsoln_continousmarginal_limits}
\end{align}

\begin{figure}[h!]
	\centering
	\includegraphics[width=0.6\linewidth]{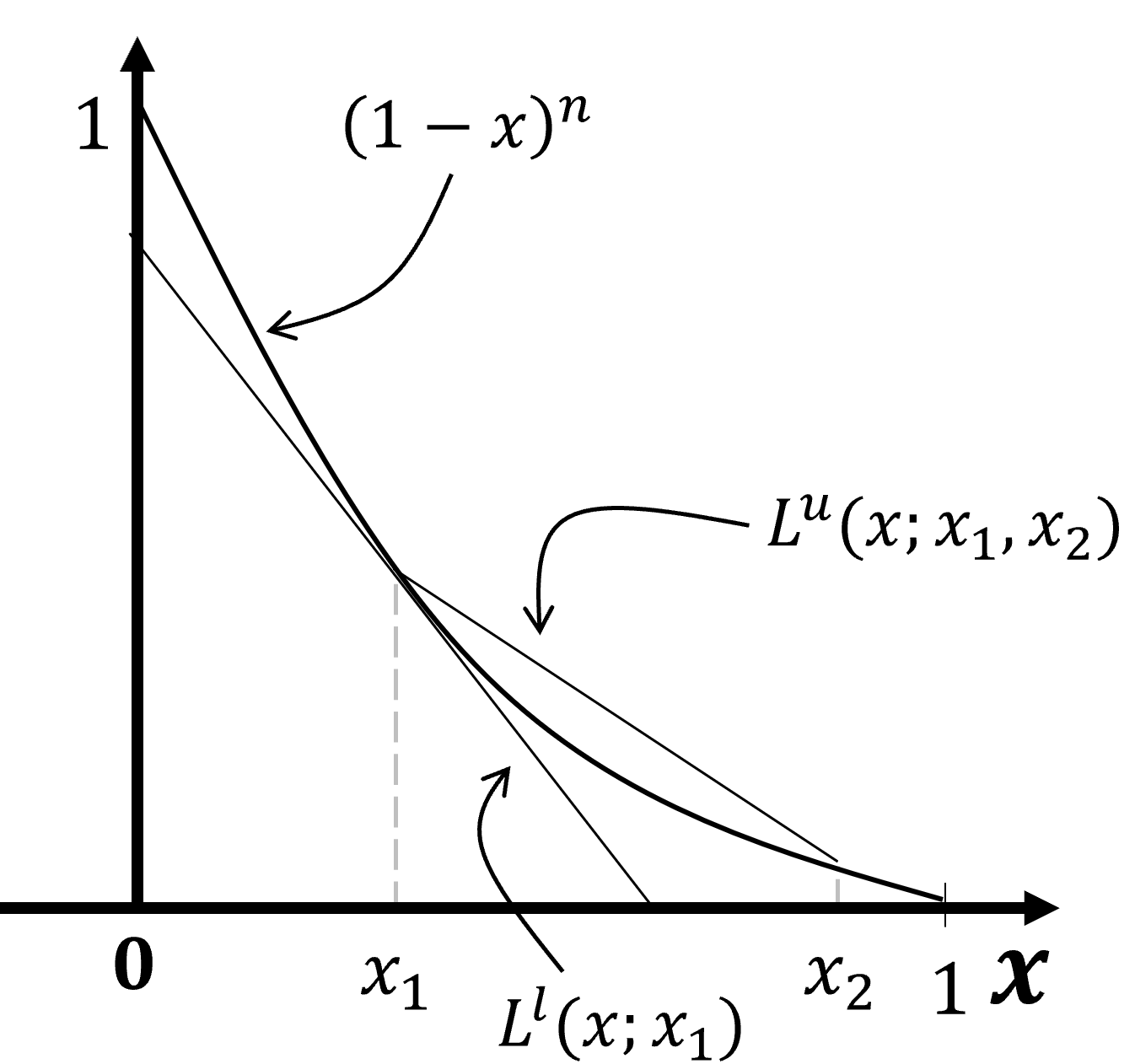}
	\caption{{\footnotesize Geometric illustration of Jensen's inequalities \normalsize}}	\label{fig:fig_jensensInequalityLines}
\end{figure}
The integrals of $(1-x)^n$ in \eqref{eqn_app_CBIsoln_continousmarginal_limits} can be bounded by a suitable choice of straight lines. We construct these as follows. For constants $x_1$ and $x_2$ such that $0\leqslant x_1<x_2\leqslant 1$, define the straight lines $L^u(x;x_1,x_2)$ and $L^l(x;x_1)$ (see Figure~\ref{fig:fig_jensensInequalityLines}):
\begin{align*}
L^u(x; x_1,x_2) &= (1-x_1)^n\!
\left(\frac{x_2-x}{x_2-x_1} \right) + (1-x_2)^n\!
\left(\frac{x-x_1}{x_2-x_1} \right) \\
L^l(x;x_1) &= \bigg(\ n(x_1-x)+1-x_1\ \bigg)(1-x_1)^{n-1}
\end{align*}
The curve $(1-x)^n$ is convex, so $L^u$ lies above the curve when $x_1<x<x_2$. While $L^l$ is tangent at $x=x_1$, so lies below the curve. These are \emph{Jensen's inequalities} \cite{Aliprantis_InfDimAnalysis1999}. Therefore: 
\begin{align}
L^l(\MyExp_{\ast}^1 X; \MyExp_{\ast}^1 X)\leqslant \frac{\int_{0}^{c^1_{\ast}}(1-x)^n f(x)\diff x}{\int_{0}^{c^1_{\ast}}\!f(x)\diff x}\leqslant L^u(\MyExp_{\ast}^1 X; 0,c^1_{\ast})
\label{eqn_boundOnIntegral_1}
\end{align}
\begin{align}
L^l(\MyExp_{\ast}^2 X; \MyExp_{\ast}^2 X)\leqslant \frac{\int_{c^2_{\ast}}^{b}(1-x)^n f(x)\diff x}{\int_{c^2_{\ast}}^{b}\!f(x)\diff x}\leqslant L^u(\MyExp_{\ast}^2 X; c^2_{\ast},b)
\label{eqn_boundOnIntegral_2}
\end{align}
\begin{align}
L^l(\MyExp X; \MyExp X)\leqslant \frac{\int_{b}^{1}(1-x)^n f(x)\diff x}{\int_{b}^{1}\!f(x)\diff x}\leqslant L^u(\MyExp X; b,1)
\label{eqn_boundOnIntegral_3}
\end{align}
where $\ \MyExp_{\ast}^1 X = \frac{\int_{0}^{c_{\ast}^1}xf(x)\diff x}{\int_{0}^{c_{\ast}^1}\!f(x)\diff x}\ $, $\MyExp_{\ast}^2 X = \frac{\int_{c_{\ast}^2}^{b}xf(x)\diff x}{\int_{c_{\ast}^2}^{b}\!f(x)\diff x}$, $\ \MyExp X = \frac{\int_{b}^{1}xf(x)\diff x}{\int_{b}^{1}\!f(x)\diff x}$.

Using the bounds in \eqref{eqn_boundOnIntegral_1}--\eqref{eqn_boundOnIntegral_3}, we can bound $Q$ (i.e. \eqref{eqn_app_CBIsoln_continousmarginal_limits}):
\begin{align}
&\hspace{3cm}0\leqslant Q\leqslant \nonumber \\
&\frac{L^u(\MyExp X;b,1)\int_{b}^{1}\!f(x)\diff x}{L^l(\MyExp_{\ast}^1 X; \MyExp_{\ast}^1 X)\!\int\limits_{0}^{c^1_{\ast}}\!f(x)\diff x\,+\,L^l(\MyExp^2_{\ast}[X];\MyExp^2_{\ast}[X])\!\int\limits_{c^2_{\ast}}^b\!f(x)\diff x} \nonumber \\
&{}={}\frac{\left(\frac{1-\MyExp X}{1-b}\right)\int_{b}^{1}\!f(x)\diff x}{\left(\frac{1-\MyExp^1_{\ast}X}{1-b}\right)^n\int_{0}^{c^1_{\ast}}\!f(x)\diff x\,+\,\left(\frac{1-\MyExp^2_{\ast}X}{1-b}\right)^n\!\int_{c^2_{\ast}}^{b}\!f(x)\diff x}
\label{eqn_boundR}
\end{align}

We used $\int_{0}^{b} \frac{(1-2x)^{n-1}}{(1-x)^{n-2}}{\bf 1}_{x\in(c^1_{\ast},c^2_{\ast})}\,f(x)\diff x> 0 $ to bound $Q$ from above -- by removing this term from $Q$'s denominator. Since $0<\MyExp^2_{\ast}X< b < 1$, we have $\left(\frac{1-\MyExp^2_{\ast}X}{1-b}\right)>1$. So, as $n\rightarrow\infty$ in \eqref{eqn_boundR}, $c^2_\ast$ tends to a non-zero value less than $b$, $\int_{c^2_{\ast}}^{b}\!f(x)\diff x$ tends to a non-zero value less than 1, $\MyExp^2_{\ast}X$ tends to a non-zero value less than $b$, $\left(\frac{1-\MyExp^2_{\ast}X}{1-b}\right)^n$ tends to $\infty$, and $\lim\limits_{n\rightarrow\infty}Q=0$.

If instead, $\phi_2>0$, then\footnote{In particular, $\int_{c_1^{\ast}}^{c_2^{\ast}}(1-x)f(x)\diff x > 0$} $c_1^{\ast}<c_2^{\ast}$ and $Q$ is the quotient on the \emph{lhs} of \eqref{eqn_app_CBIsoln_continousmarginal_limits_original}. As $n\rightarrow\infty$, integrals of $(1-x)^n$ in $Q$ all tend to $0$ by the \emph{monotone convergence theorem} (m.c.t.) \cite{schilling_2005}. The m.c.t. also implies $\lim\limits_{n\rightarrow\infty}\int_{0}^{b}\frac{(1-2x)^{n-1}}{(1-x)^{n-2}}{\bf 1}_{x\in(c^1_{\ast},c^2_{\ast})}\,f(x)\diff x=0$. Therefore, $\lim\limits_{n\rightarrow\infty}Q=\infty$.
\end{proof}

\renewcommand{\thealgorithm}{} 
\newpage
\section{Algorithm for Numerical Estimates of $c_{\ast}^{1}$, $c_{\ast}^{2}$}
\label{app_D}
For brevity, we omit the analogous algorithm for $c_{1}^{*}$, $c_{2}^{*}$.
\begin{algorithm}
\caption{Bisection Method based Algorithm for $c_{*}^{1}$, $c_{*}^{2}$}
\label{alg_bisec}
\begin{algorithmic}[1]
\Require 
The \textit{pfd} density $f(x)$, an intermediate function
$g_l(x)$, the target \textit{pfd} bound $b$, a tolerance $\epsilon$ and the doubts $\phi_1,\phi_2$.
\Ensure $c_{*}^{1}$, $c_{*}^{2}$ 
\If{$\int_0^b f(u) \diff u>\phi_1$}
    \State $x_l= \argmax g_l(x)$
    \If{$x_l>b$}
    \State $c_{*}^{1}=\mathrm{solve}(\int_x^b f(u) \diff u=\phi_1, x\in[0, b))$ 
    \State $c_{*}^{2}=b$;
    \Return $c_{*}^{1}$, $c_{*}^{2}$
    \Else
    \State $c=\mathrm{solve}(g_l(x)=g_l(b), x\in[0, x_l))$
    \If{$\int_c^b f(u) \diff u<\phi_1$}
    \State $c_{*}^{1}=\mathrm{solve}(\int_x^b f(u) \diff u=\phi_1, x\in[0,b))$ 
    \State $c_{*}^{2}=b$;
    \Return $c_{*}^{1}$, $c_{*}^{2}$
    \Else \Comment{Start of the bisection method}
    \State $c_{*}^{2}=b$
    \State $tmp_{lb}=x_l$
    \State $tmp_{ub}=b$
    \State $tmp_{\phi_1}=\int_c^b f(u) \diff u$
    \While{$\mid tmp_{\phi_1} - \phi_1\mid > \epsilon$}
        \If{$tmp_{\phi_1} > \phi_1$}
           \State $tmp_{ub}=c_{*}^{2}$
           \State $c_{*}^{2}=\frac{c_{*}^{2}+tmp_{lb}}{2}$
        \Else
            \State $tmp_{lb}=c_{*}^{2}$
            \State $c_{*}^{2}=\frac{c_{*}^{2}+tmp_{ub}}{2}$
        \EndIf
         \State $c_{*}^{1}=\! \mathrm{solve}(g_l(x)\!=\! g_l(c_{*}^{2}), x \in [0, x_l))$
         \State $tmp_{\phi_1}=\int_{c_{*}^{1}}^{c_{*}^{2}} f(u) \diff u$
    \EndWhile
    \EndIf \Comment{End of the bisection method}
    \State \Return $c_{*}^{1}$, $c_{*}^{2}$
    \EndIf
\Else \Comment{This is the case when $\int_0^b f(u) \diff u<\phi_1$}
\State print(``PK\ref{cons_reliable_system} violated!'')
\EndIf

\end{algorithmic}
\end{algorithm}

\end{document}